\title{Efficient algorithms for the Potts model on small-set expanders}
\author[1]{Charlie Carlson\thanks{CharlieAnneCarlson@ucsb.edu}}
\author[2]{Ewan Davies\thanks{research@ewandavies.org}}
\author[3]{Alexandra Kolla\thanks{akolla@ucsc.edu. Supported in part by NSF grant CCF-1452923.}}
\affil[1]{Department of Computer Science, University of California Santa Barbara, USA}
\affil[2]{Department of Computer Science, Colorado State University, Fort Collins, USA}
\affil[3]{Department of Computer Science and Engineering, University of California Santa Cruz, USA}
\newcommand{\Input}{\item[{\bf Input:}]}
\newcommand{\Output}{\item[{\bf Output:}]}
\renewcommand{\Return}{\item[{\bf return}]}
\newcommand{\Goto}{{\bf goto}}
\newcommand{\e}[2]{{w(#1 \to #2)}}
\DeclareMathOperator{\argmax}{argmax}
\newtheorem{theorem}{Theorem}
\newtheorem{lemma}[theorem]{Lemma}
\newtheorem{claim}[theorem]{Claim}
\newtheorem{fact}[theorem]{Fact}
\newcommand*{\cC}{\mathcal{C}}
\newcommand*{\cG}{\mathcal{G}}
\newcommand*{\csproblem}[1]{\textsc{#1}}
\newcommand*{\eps}{\varepsilon}
\newcommand*{\lam}{\lambda}
\newcommand*{\Lam}{\Lambda}
\newcommand*{\gam}{\gamma}
\newcommand*{\Gam}{\Gamma}
\newcommand*{\phiin}{\phi_{\mathrm{in}}}
\newcommand*{\phiout}{\phi_{\mathrm{out}}}
\newcommand*{\Zalg}{Z_{\mathrm{alg}}}
\let\oldphi\phi
\renewcommand*{\phi}{\varphi}
\newcommand*{\ursell}{\oldphi}
\newcommand*{\sml}{\mathrm{bad}}
\newcommand*{\lrg}{\mathrm{good}}
\newcommand*{\BIS}{\#{}BIS}
\newcommand*{\SAT}{\#{}SAT}
\newcommand*{\SP}{\upshape{\textsc{Spectral Partitioning}}}
\DeclareMathOperator{\vol}{\mu}
\begin{document}

\maketitle

\begin{abstract}
	An emerging trend in approximate counting is to show that certain `low-temperature' problems are easy on typical instances, despite worst-case hardness results. 
	For the class of regular graphs one usually shows that expansion can be exploited algorithmically, and since random regular graphs are good expanders with high probability the problem is typically tractable. 
	Inspired by approaches used in subexponential-time algorithms for Unique Games, we develop an approximation algorithm for the partition function of the ferromagnetic Potts model on graphs with a small-set expansion condition. 
	In such graphs it may not suffice to explore the state space of the model close to ground states, and a novel feature of our method is to efficiently find a larger set of `pseudo-ground states' such that it is enough to explore the model around each pseudo-ground state.
\end{abstract}

\section{Introduction}

The Potts model is a probability distribution on colorings of the vertices of a graph which arises in combinatorics, approximate counting and statistical physics. 
In the ferromagnetic Potts model, which is the focus of our work, interactions between colors are chosen to favor monochromatic edges.
The model is defined by a \emph{partition function}, which turns out to be a specialization of the Tutte polynomial. 
The main algorithmic questions associated to statistical models such as the Potts model are to approximate the partition function and to sample approximately from the probability distribution. 
In probability and statistical physics one might be interested in \emph{phase transitions} as the parameters of the model vary. 
For example, on the complete graph (the \emph{mean-field} Potts model in physics terminology) the phase transition is well-understood (see e.g.~\cite{BGJ96} and the references therein). 
An intuitive description of the phase transition is that at high temperature the model is dominated by `disordered' colorings with no dominant color or long-range structure, while at low temperature the model is dominated by colorings with a dominant color which grants long-range structure to the coloring. 
The phase transition is also key to the design of approximation algorithms for the partition function. 
At high temperature the model is characterized by lack of correlation between colors of distant vertices which can be exploited algorithmically, and the description of the model at low temperatures suggests a natural approach: pick a dominant color and consider small deviations from a monochromatic coloring.
This approach was pioneered by Helmuth, Perkins and Regts~\cite{HPR19b} and has let to a growing theory of `low-temperature algorithms' for various statistical models including the ferromagnetic Potts model~\cite{JKP19,CGG+21,HJP20,BCH+20,GGS21b}. 

In the language of physics, a strategy for such low-temperature algorithms is to consider the \emph{ground states} of the model and try to efficiently enumerate the contribution to the partition function from states close to a ground state. 
For the ferromagnetic Potts model, a ground state corresponds to a coloring in which every edge is monochromatic.
In many cases there are hardness results suggesting that this approach to a low-temperature algorithm cannot work in general (unless a surprising complexity-theoretic collapse occurs), so we seek sufficient conditions on the graphs considered such that one can carry out this approach. 
A major breakthrough of Jenssen, Keevash and Perkins~\cite{JKP19} shows that the combinatorial notion of expansion suffices in several settings, including the ferromagnetic Potts model on bounded-degree graphs.
Faster algorithms based on Markov chains but using the same underlying techniques were later given by Chen, Galanis, Goldberg, Perkins and Vigoda~\cite{CGG+21}.

We give some key definitions before introducing our methods and new results.
Given a graph $G$, a number of colors $q$, and an inverse-temperature parameter $\beta>0$, the partition function of the ferromagnetic Potts model on $G$ is 
\[ Z_G(\beta) = \sum_{\omega: V(G)\to [q]} e^{\beta m_G(\omega)}, \]
where $m_G(\omega)$ counts the number of edges of $G$ which are monochromatic under the coloring $\omega$. 
Note that large $\beta$ corresponds to low temperature, giving colorings with many monochromatic edges a larger contribution to the partition function.
The main algorithmic question we study is therefore to approximate $Z_G(\beta)$ for large $\beta$.
The definition of $Z_G(\beta)$ highlights an essential entropy-energy trade-off that informs the phase transition: colorings with $m_G(\omega)$ large have larger contribution to the partition function, but only when $\beta$ is large compared to $q$ and $|E(G)|$ does this overpower the sheer number $q^{|V(G)|}$ of $q$-colorings. 

We can now identify an obstacle to extending the approach of Jenssen, Keevash and Perkins~\cite{JKP19} to a low-temperature algorithm for a broader class of graphs than expanders.   
Consider a graph that is formed from the disjoint union of two expanders (of equal size) with a small number of edges, e.g.\ a matching, added between them. 
For large $\beta$ it is not the case that all `important' states are close to ground states: if $\omega$ is such that one of the expanders is colored red and the other blue then a high proportion of the edges are monochromatic (and so $e^{\beta m_G(\omega)}$ is large), but the coloring is very far from a true ground state (a monochromatic coloring). 
Our main innovation is to overcome this obstacle and design an algorithm that works for graphs such as this near-disjoint union of two expanders. 
It turns out that this type of graph is rather natural, and emerges from a weakening of expansion known as small-set expansion that one can arrive at spectrally or combinatorially. 
We thus prove a kind of structure theorem for the Potts model on graphs with small-set expansion, stating that the graph admits a partition such that $Z_G(\beta)$ is dominated by the contribution from colorings such that each piece of the partition is near-monochromatic.

\subsection{Connections to Unique Games}

The main reason to study small-set expansion and the ferromagnetic Potts model is rather broad. 
There is a natural counting problem known as \BIS{}, which is to approximately count the number of independent sets in bipartite graphs. 
The complexity of \BIS{} is unknown: there is no known general polynomial-time approximation algorithm with small constant approximation ratio, but also no proof of hardness connecting \BIS{} to a canonical `hard' approximate counting problem such as \SAT{}.
There is an entire complexity class associated to \BIS, and a theory of approximation-preserving reductions under which we wish to know the complexity of problems in the class~\cite{DGGJ04}.
The relevance of \BIS{} to this work is that approximating the Potts partition function is \BIS{}-hard in the low-temperature regime~\cite{GJ12,GSVY16a}.

There is a superficial similarity between \BIS{} and the well-known `Unique Games Problem' from combinatorial optimization, despite appearing in somewhat distinct settings.
Unique Games is a decision problem on graphs related to \csproblem{Max-Cut}, and although it is conjectured to be NP-hard~\cite{Kho02}, no proof of this is known. 
Our approach starts with the observation that recent advances show the combinatorial notion of expansion makes both Unique Games~\cite{AKK+08,MM11} and several \BIS-hard problems~\cite{JKP19} easy in the sense of admitting a polynomial-time solution. 
In the context of Unique Games, subsequent research~\cite{Kol10,ABS15} showed that a more refined and algebraic view on expansion is highly relevant to the problem, and in particular these works give algorithms that tackle the subspace spanned by certain eigenvalues of the graph.
This led to a key discovery separating Unique Games from typical NP-hard problems: the subexponential-time algorithm for general instances of Arora, Barak and Steurer~\cite{ABS15}
The techniques of Arora, Barak and Steurer~\cite{ABS15} also highlight the importance of small-set expansion and the related spectral notion of threshold rank. 
In this work we show that small-set expansion can be exploited in approximate counting, and give an algorithm for the Potts model which requires a small-set expansion condition, or (somewhat equivalently) a suitably large gap in the spectrum of the graph. 
This generalizes previous work that relied on the usual notion of expansion~\cite{JKP19}, and our results hint at deep connections between the complexity of \BIS{} and recent advances in our understanding of the Unique Games Problem. 
In~\cite{CDK+19} it was shown that a hypothetical algorithm able to approximate suitable low-temperature Potts-like partition functions in some parameter range would refute the Unique Games Conjecture. This work goes in the other direction, attempting to put ideas that led to algorithms for Unique Games to use approximating the Potts partition function.

The main motivation for this work is the pursuit of a subexponential-time algorithm for approximating the partition function of the ferromagnetic Potts model, and thus for \BIS. 

The graphs studied in this paper represent a departure from existing trends in approximate counting. Typically, the partition functions studied are motivated by problems in physics and this guides the choice of graph instances too: lattices and random (regular) graphs are natural graphical models of physical space. 
Since random graphs are expanders with high probability, the study of expansion is motivated by these graphs---as well as by the study of the average-case complexity of approximating the partition function. 
Small-set expansion and spectral partitioning have been used successfully in the design of approximation algorithms for constraint satisfaction problems, and our work demonstrates that these techniques belong in approximate counting too.

\subsection{Preliminaries}

The partition function of the ferromagnetic Potts model on a graph $G=(V,E)$ is the function 
\[ Z_G(\beta) = \sum_{\omega : V \to [q]}e^{\beta m_G(\omega)}\,,\]
where $\beta>0$ is a parameter related to the physical notion of temperature.
The sum runs over functions $\omega$ from $V(G)$ to a set $[q]$ of $q$ colors (also known as spins), and $m_G(\omega)$ is the number of monochromatic edges of $G$ under the coloring $\omega$. 
Galanis, Štefankovič, Vigoda, and Yang~\cite{GSVY16a} showed that for $q\ge 3$ and $\beta>\beta_o(q,\Delta)$ it is \BIS-hard to approximate $Z_G(\beta)$ on graphs of maximum degree $\Delta$, where $\beta_o$ is a natural physical threshold known as the \emph{order-disorder threshold}.
In fact, they proved this for bipartite graphs though we will not restrict our attention to this class.
We note that $\beta_o\sim 2\log(q)/\Delta$ for fixed $q$ as $\Delta\to\infty$, and refer the reader to~\cite{GSVY16a} for the precise definition of $\beta_o$ and a discussion of its physical significance.

We are interested in relative approximation of real numbers, where we say that $\hat z$ is a relative $\eps$-approximation of $z$ if $e^{-\eps} \le z/\hat z\le e^\eps$.
An \emph{FPTAS} for an approximate counting problem is an algorithm that for any $\eps>0$ produces a relative $\eps$-approximation to the desired function (e.g.\ $Z_G(\beta)$ as above) in time polynomial in the size of the input and $1/\eps$. 
Our main result is an FPTAS for $Z_G(\beta)$ subject to the conditions that $G$ has maximum degree $\Delta$ and a small-set expansion condition that we discuss below, and that $\beta$ is large enough. 
This results in an efficient algorithm for a restricted version of the \BIS{}-hard approximate counting problem because we require the expansion condition on $G$, but our algorithm works for large enough $\beta$ inside the relevant parameter range.

To state our results we must first discuss expansion and related spectral concepts.
Given a graph $G=(V,E)$, the \emph{boundary} $\partial(S)$ of a set $S\subset V$ is the set of edges with exactly one endpoint in $S$. 
Similarly, the \emph{closure} $\nabla(S)$ of a set $S\subset V$ is the set of edges with at least one endpoint in $S$. 
We say that $G$ is an \emph{$\alpha$-expander} if every $S\subset V$ with $|S|\le|V|/2$ has $|\partial(S)|\ge \alpha|S|$. 
We also work with a related notion of edge expansion that is more sensitive to the \emph{volume} of a set of vertices than its size. 
Let $\vol_G(S)=\sum_{v\in S}\deg_G(v)$ be the \emph{volume} of $S$, and define the \emph{conductance} of a set $S\subset V(G)$ to be
\[ \phi_G(S) := \frac{|\partial(S)|}{\vol_G(S)}\,, \]
and the conductance of the graph $G$ itself to be
\[ \phi(G) := \min_{S : \vol(S)\le \vol(V)/2} \phi_G(S)\,. \]
We write $\vol(u)$ for the degree of the vertex $u$, since this is consistent with the definition of $\vol(\{u\})$.
Finally, we define the \emph{expansion profile}~\cite{LK99,OT12a} of $G$ to be the function given by
\[ \phi_G(\gamma) := \min_{S : \vol(S) \le \gamma} \phi_G(S)\,, \]
so that the expansion profile evaluated at $\gamma=\vol(V)/2$ is the conductance of $G$. 
The usual notion of an expander is a graph for which $\phi(G)$ is at least some constant, and a \emph{small-set expander} is a graph for which $\phi_G(\delta\mu(V))$ is at least some constant for some small $\delta\in(0,1)$. 
This definition leaves open the possibility that sets of large volume have small boundary, but enforces that sets of small volume have large boundary. 

There is a wealth of literature on the relation between the spectrum of matrices associated with $G$ and expansion of $G$, e.g.~\cite{LOT12,OT14,KLL+13}, and we let $0=\lam_1\le \lam_2\le\dotsb\le \lam_n \le 2$ be the eigenvalues of the normalized Laplacian of $G$. 
Briefly, an expander has an \emph{eigenvalue gap} between $\lam_1=0$ and $\lam_2$, and a natural expansion condition is simply $\lam_2 \ge \Omega(1)$. 
Our results essentially require a more general condition which corresponds to an eigenvalue gap between $\lam_{k-1}$ and $\lam_k$ for some constant $k$.

\subsection{Our results}

Recall that the primary motivation in this work is to develop for the \BIS{}-hard problem of approximating $Z_G(\beta)$ results that parallel the advances in algorithms for the Unique Games Problem. Accordingly, we state results that require small-set expansion conditions which generalize the $\lam_2 \ge \Omega(1)$ definition of expansion.

\begin{theorem}\label{thm:regularsse}
    Let $G=(V,E)$ be a $\Delta$-regular graph on $n$ vertices.
    There is an absolute constant $C$ such that the following holds. 
    Suppose that for some integer $k\ge 2$, we have the small-set expansion condition that
	\[ |\partial(S)|\ge C\Delta k^6\sqrt{\lam_{k-1}}|S| \]
	for all sets $S\subset V(G)$ of size at most $n/k$. 
    Then for the $q$-color ferromagnetic Potts model with 
	\begin{equation*}
		\beta \ge Ck^6\cdot \frac{4 + 2\log(q\Delta)}{\lam_k^2\Delta}\,,
	\end{equation*}
    there is a deterministic algorithm that produces a relative $\eps$-approximation to $Z_G(\beta)$ in time polynomial in $n$, $1/\eps$ and $2^k$ provided $q$ and $\Delta$ are constant. For constant $k$ this gives an FPTAS\@.
\end{theorem}

\noindent
For $k=2$ the small-set expansion condition is trivially true, and the above result is essentially the same as an algorithm in~\cite{JKP19}. 
Perhaps the weakest natural small-set expansion condition for larger $k$ is a generalization of $\alpha$-expansion that would say for some constant $k\ge 2$ that sets $S$ of size at most $n/k$ have $|\partial(S)|\ge\alpha|S|$. 
The condition above is a slight strengthening of this that requires $\alpha$ to be at least some function of $k$, $\lam_{k-1}$, $\Delta$. 

The above result is an immediate corollary of the following more general form that dispenses with the regularity assumption.

\begin{theorem}\label{thm:sse}
    Let $G=(V,E)$ be a graph on $n$ vertices with maximum degree $\Delta$ and minimum degree $\delta$. 
    There is an absolute constant $C$ such that the following holds. 
    Suppose that for some integer $k\ge 2$,
    \[ \phi_G\left(\frac{\Delta}{\delta}\frac{\vol(V)}{k}\right) \ge Ck^6\sqrt{\lam_{k-1}}\,. \] 
    Then for the $q$-color ferromagnetic Potts model with 
	\begin{equation}\label{eq:sse:beta}
		\beta \ge Ck^6\cdot \frac{4 + 2\log(q\Delta)}{\lam_k^2\delta}\,,
	\end{equation}
    there is a deterministic algorithm that produces a relative $\eps$-approximation to $Z_G(\beta)$ in time polynomial in $n$, $1/\eps$ and $2^k$ provided $q$ and $\Delta$ are constant. For constant $k$ this gives an FPTAS\@.
\end{theorem}

We can obtain improved running times at the expense of requiring randomness in these approximation algorithms by swapping our main tool, the cluster expansion, for a Markov chain. 
The details of this idea are given in~\cite{CGG+21}, and how they apply to the similar algorithms presented in~\cite{JKP19}. 
It is routine to verify that the conditions required for the methods of~\cite{CGG+21} are implied by our work here, and so we immediately obtain more efficient, randomized versions of our algorithms with those methods. 
We do not discuss this in detail here.

Our methods for approximating $Z_G(\beta)$ require a partition of $V(G)$ such that (i) each part induces an expander with large minimum degree, (ii) few edges lie between parts, and (iii) no part is too small. 
Our main algorithmic result exploits a partition with a guarantee on the size of the smallest part. 
The following technical result is used to prove both \cref{thm:regularsse,thm:sse}.
We say that a partition $P_1,\dotsc,P_\ell$ of $V(G)$ is a $(\phiin,\phiout,\tau\delta)$-partition if for all $i\in[\ell]$ we have $\phi(G[P_i])\ge\phiin$, $\phi_G(P_i)\le\phiout$ and $G[P_i]$ has minimum degree at least $\tau\delta$.

\begin{theorem}\label{thm:nobadparts}
	Suppose that we have a graph $G=(V,E)$ on $n$ vertices of maximum degree $\Delta$.
	Suppose also that for some $\ell\ge 1$ we have sets $P_1,\dotsc,P_\ell$ that form a $(\phiin,\phiout,\tau\delta)$-partition of $V$ such that $|P_i|\ge \eta n$ for all $i\in[\ell]$. 
	Then for the $q$-color ferromagnetic Potts model on $G$ with 
	\[ \beta \ge \frac{2+4\log(q\Delta)}{\phiin\tau\delta\eta} \,,\]
	there is a deterministic algorithm that produces a relative $\xi$-approximation to $Z_G(\beta)$ in time at most
	\[ O\big(q^\ell\Delta n (2n/\xi)^{O(\log(q\Delta))} \big)\,.\]
\end{theorem}

\noindent
Note that we do not make use of $\phiout$ in the conclusion of the theorem. 
While (a function of) $\phiout$ gives an upper bound on the number of edges leaving each part, so does the minimum degree condition inside each part and this is the bottleneck for our methods.

To derive \cref{thm:regularsse,thm:sse} from this result we develop a mild strengthening of a spectral partitioning result of Oveis Gharan and Trevisan~\cite{OT14} that yields a partition of a graph into expanders with control on the number of edges between the parts. 
The following theorem is an extension of the main algorithmic result in~\cite{OT14} which, subject to an eigenvalue gap $\lam_k \gg \lam_{k-1}$, gives a partition of the type required for our results. In the case that $G$ has minimum degree $\delta$, we add the minimum degree condition parametrized by $\tau\delta$ to the partition with the guarantee that $\tau\ge \Omega(1/k)$.

\begin{theorem}\label{thm:partitioning}
    Let $G=(V,E)$ be a graph on $n$ vertices of minimum degree $\delta$.
    Given $k\ge 2$ such that $\lam_k>0$ there is an algorithm that runs in time $O(k n^2|E|^2)$ which yields a partition $P_1,\dotsc,P_\ell$ of $V$ into $\ell<k$ pieces that forms a $(\phiin,\phiout,\tau\delta)$-partition where
    \[ \phiin \ge \Omega(\lam_k^2/k^4)\,,\qquad \phiout \le O\left(k^6\sqrt{\lam_{k-1}}\right)\,,\qquad \tau\ge\Omega(1/k)\,. \]
\end{theorem}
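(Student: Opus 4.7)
The strategy is to invoke the spectral partitioning algorithm of Oveis Gharan and Trevisan~\cite{OT14}, which already yields a partition $P_1,\dotsc,P_\ell$ of $V$ with $\ell<k$ satisfying the stated $\phiin$ and $\phiout$ bounds in the claimed time $O(kn^2|E|^2)$. The only new content of the theorem is the minimum-degree guarantee $\tau\ge\Omega(1/k)$, which I would enforce via a post-processing cleanup phase.

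Fix $\tau=1/(2k)$. The cleanup repeatedly selects a vertex $v\in P_i$ with $\deg_{G[P_i]}(v)<\tau\delta$ and moves it to the part $P_j\neq P_i$ maximising $|N(v)\cap P_j|$. Since $v$ has at least $(1-\tau)\delta$ neighbours outside $P_i$ distributed among fewer than $k$ other parts, pigeonholing gives $|N(v)\cap P_j|\ge(1-\tau)\delta/k>\tau\delta$, so the move strictly improves $v$'s induced degree and also strictly decreases the potential $\sum_i|\partial(P_i)|$. This potential is an integer in $[0,2|E|]$, so the procedure terminates after at most $2|E|$ iterations of $O(\Delta)$ work each, comfortably within the claimed runtime. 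At termination every vertex satisfies the minimum-degree condition with $\tau=\Omega(1/k)$ by construction, and $\ell$ has only possibly decreased (if some part was emptied), so $\ell<k$ still holds.

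The principal obstacle, where I expect the bulk of the technical work to lie, is verifying that the cleanup does not damage the $\phiin$ and $\phiout$ bounds by more than a constant factor. For $\phiout$, although $|\partial(P_j)|$ can grow by $\deg(v)-2|N(v)\cap P_j|$ when $v$ joins $P_j$, the volume $\vol(P_j)$ simultaneously grows by $\deg(v)$, and I would maintain the invariant $|\partial(P_j)|\le C_{\mathrm{out}}\vol(P_j)$ with an appropriately enlarged constant $C_{\mathrm{out}}$ by a direct amortisation argument on the worst-case per-move ratio increase. For $\phiin$, the main tool would be a stability lemma: adding a single vertex $v$ with at least $\tau\delta$ internal neighbours to an expander $H$ degrades the conductance by at most a factor depending only on $\tau$. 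I would prove it by splitting an arbitrary cut $S\subseteq V(H)\cup\{v\}$ by whether $v\in S$ and whether $\vol_H(S\setminus\{v\})$ exceeds a threshold of order $\deg(v)$, reducing each case to the expansion of $H$ applied to $S\setminus\{v\}$ or its complement. The truly delicate point is that a naive composition of this lemma over all $|E|$ possible moves gives an exponential loss in $\phiin$, so the argument must avoid this---for instance by relating the final partition directly to the OT14 partition via a bound on the symmetric difference per part and applying the stability lemma once per part rather than once per move.
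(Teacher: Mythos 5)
Your approach is genuinely different from the paper's, and it contains a real gap precisely at the point you flag as "the truly delicate point."

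The paper does not run OT14 to completion and then clean up. Instead it modifies OT14's inner loop: after each structural update it runs two small cleanup sub-loops — one that removes from each inner set $B_i$ any vertex $v$ with $\vol_{G[B_i]}(v) < \vol_G(v)/5$, and one that moves vertices of $P_i\setminus B_i$ to whichever part holds the plurality of their neighbours — and only then returns to the top of the main while loop, whose condition re-runs \SP{} on each $G[P_i]$. The crucial consequence is that the algorithm never needs to argue that cleanup \emph{preserves} the inner conductance bound: if cleanup degrades some $G[P_i]$, \SP{} will discover a sparse cut on the next pass and the main loop keeps running. The $\phiin$ guarantee is then immediate at termination from the loop condition (plus Cheeger for the quadratic loss), and the $\phiout$ guarantee comes from a carefully maintained invariant on $\phi(B_i)$ (the paper's Claim~\ref{claim:phiBi}), where the only new work is verifying that deleting a low-degree vertex from $B_i$ cannot increase $\phi(B_i)$ — a one-line calculation recorded as Claim~\ref{claim:removevertex}. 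Termination and the $O(kn^2|E|^2)$ runtime follow from the same potential-function bookkeeping as OT14, with $O(n)+O(|E|)$ extra work per iteration.

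The gap in your post-processing plan is that a single deletion of a low-internal-degree vertex $v$ from an expander $G[P_i]$ incurs an \emph{additive} loss of up to $\tau\delta$ edges on every cut, which is a large \emph{relative} loss for small cuts; your stability lemma therefore cannot deliver a constant-factor bound that survives composition, and your fallback (bounding symmetric difference per part and applying the lemma once per part) does not work because the potential $\sum_i|\partial(P_i)|$ only bounds the number of moves by $O(\phiout|E|)$, which can be $\Omega(|E|)$, so the symmetric difference per part can be a constant fraction of $|P_i|$ and no useful stability statement holds at that scale. You also note in passing that parts may be emptied; if $\ell$ drops, the theorem's $\phiin$ guarantee on the remaining parts is not obviously recoverable. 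The paper avoids all of this by never committing to a partition until \SP{} certifies every part, which is the essential idea your proposal is missing.
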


\noindent
In fact, we can ensure the mildly stronger property that for all $i\in[\ell]$ and for all $v\in P_i$ we have $\vol_{G[P_i]}(v)\ge\tau\vol_G(v)$, but the definitions given above are more convenient in our application.

One of the main reasons to be interested in expansion is that random graphs typically have excellent expansion properties. 
But any random graph model with strong anisotropy could fail to have expansion, while still having readily exploitable structure.
To further motivate our work, we show how \cref{thm:nobadparts} applies to the regular stochastic block model, or \emph{RSBM}, as defined in e.g.~\cite{KP20} and motivated therein as a natural model of a clustered network. 

The RSBM is defined by positive integers $d$, $k$, $n$, and a $k\times k$ symmetric matrix of strictly positive integers $A$ whose row sums are all equal to $d$ (and often whose diagonal entries are at least $3$, meaning $d\ge 3$ also).
Then $n$ vertices are divided into $k$ equal-sized communities $P_1,\dotsc,P_k$ where for each $1\le i\le k$ the community $P_i$ is a random $A_{i,i}$-regular graph and for each $1\le i<j\le k$ we put a random $A_{i,j}$-regular bipartite graph on $(P_i,P_j)$. 
The resulting graph $G$ is $d$-regular, but the distribution need not be close to the random $d$-regular graph, for example when the diagonal entries dominate the off-diagonal entries in $A$ the graph resembles $k$ random graphs which are loosely connected. 
This gives a natural model with $k$ equal-sized communities that form connections inside their communities much more readily than between communities. We use the regular variant of the stochastic block model for convenience, a similar result holds for the usual definition of the model.

\begin{theorem}\label{thm:rsbm}
	Let $d,k,n$ be positive integers such that $n$ is a multiple of $k$ and $dn$ is even. Suppose that for fixed $\eps\in[0,1/2)$, $d$ is sufficiently large and we have a $k\times k$ symmetric matrix of strictly positive integers $A$ whose row sums are all equal to $d$ such that the diagonal entries are all at least $(1-\eps)d$. 
	Let $G$ be an instance of the regular stochastic block model defined by $d,k,n,A$. 
	Then, given the community identities, for any fixed $\zeta>0$, with probability $1-o(1)$ as $n\to \infty$, for all 
	\[ \beta \ge 16k\frac{1+2\log(qd)}{d} \]
	there is an algorithm that produces a relative $\xi$-approximation to $Z_G(\beta)$ in time 
	\[ O(q^kdn(2n/\xi)^{O(qd)}). \]
	For $k,q,d$ constant this gives an FPTAS\@.
\end{theorem}

Although this algorithm requires the community identities to proceed, spectral properties of the matrix $A$ can be used to efficiently, approximately recover the identities in the sense of \emph{weak recovery}, see~\cite[Theorem 4.2]{KP20}.
We avoid getting into the details here, but the algorithm of~\cite{KP20} together with \cref{thm:rsbm} implies that there is a natural spectral condition on $A$ one can add to \cref{thm:rsbm} that obviates the need to know community identities, and still gives an FPTAS when $k,q,d$ are constant.

Note that the statement of \cref{thm:sse} specialized to regular graphs is precisely \cref{thm:regularsse}, so to prove both these theorems we can focus on \cref{thm:sse}.
The expansion profile assumption means that with $\eta=1/k$, every part in the partition guaranteed by \cref{thm:partitioning} must have size at least $\eta n$, as we now justify. 
The partition guarantee in Theorem~\ref{thm:partitioning} states that the resulting $P_1,\dotsc,P_\ell$ has $\phi(P_i)\le \phiout = O(k^6\sqrt{\lam_{k-1}})$. 
Suppose that $\phiout >0$. 
Now an assumption $\phi(\Delta\vol(V)/(k\delta)) \ge 2\phiout$ means that any $S\subset V$ with $|S|\le n/k$, and hence $\vol(S)\le \Delta\vol(V)/(k\delta)$, has $\phi(S)\ge 2\phiout$. 
Thus, every $P_i$ has $|P_i|\ge \eta n$ with $\eta=1/k$. 
Theorem~\ref{thm:sse} now follows from Theorems~\ref{thm:partitioning} and~\ref{thm:nobadparts}.

Proving \cref{thm:rsbm} is a simple matter of using the properties of random regular graphs. 
Using Cheeger's inequality in the form $\phi(G)\ge \lam_2/2$ and a result of Friedman~\cite{Fri08}, we know that the random $d$-regular graph $G$ has with probability $1-o(1)$ as the number of vertices tends to infinity,
\[ \phi(G) \ge \frac{1}{2} - \frac{\sqrt{d-1}}{d} - \frac{\eps}{d}. \]
Then the assumptions of \cref{thm:rsbm} give that with probability $1-o(1)$, the partition $P_1,\dotsc,P_k$ given by the communities, which are of size $n/k$, has $\phi_{G[P_i]} \ge (1-\eps)/2$ for each $i$ (where we use that $d$ is sufficiently large). 
Each $G[P_i]$ also has minimum degree at least $(1-\eps)d$, which allows a direct application of \cref{thm:nobadparts}.
These simple derivations mean that it now suffices to prove \cref{thm:nobadparts,thm:partitioning}.

\subsection{Technical overview}
Our main proof technique is based on the natural idea from physics that systems at low temperatures are typically characterized by their ground states and small deviations from them. 
In our setting, a ground state is a coloring $\omega\in [q]^{V(G)}$ which maximizes the number $m_G(\omega)$ of monochromatic edges, and so has a maximum contribution to the partition function $Z_G(\beta)$ over all possible colorings.
A deviation from a fixed ground state is represented by a set of connected induced subgraph with colorings that differ from the ground state coloring.
We will show that one only has to consider deviations of a bounded size using a standard tool known as the \emph{cluster expansion} of a carefully constructed \emph{abstract polymer model}.
An abstract polymer model is an auxiliary physical model that considers assignments to \emph{polymers} in the graph instead of just vertices.
We will use connected induced subgraphs as our polymers.
The cluster expansion of this model is an infinite series for the logarithm of the partition function of this model.
We appeal to a general convergence criterion~\cite{KP86} for this series and evaluate a truncation of the series to approximate the partition function. 
The remaining task is then to show that the sum over colorings that constitutes our function $Z_G(\beta)$ of interest can be broken up into pieces that are (approximately) disjoint and well-approximated by carefully chosen abstract polymer models. 

This approach has its roots in statistical physics, and the fact that the cluster expansion yields approximation algorithms was first explored in~\cite{HPR19b}. Amongst several notable subsequent applications, the method was used to great effect in~\cite{JKP19,CGG+21} where an FPTAS and an FPRAS (a randomized version of an FPTAS) for $Z_G(\beta)$ and related partition functions were given for expander graphs. 
Our work is an extension of the relevant ideas in the following way. 
Some ground states for $Z_G(\beta)$ are trivial to compute and work with, as they are simply the colorings that give every vertex the same color. 
But when the graph has low conductance (i.e.\ is nearly disconnected) there may be non-monochromatic colorings that are very close to being a ground state, and very far from the actual ground states. 
This presents a genuine obstruction to the approach of~\cite{JKP19}, which we overcome here.
Our main idea is to partition the sum over colorings in $Z_G(\beta)$ into contributions from states that we call \emph{pseudo-ground states} and devise a polymer model for each one, rather that relying only on true ground states. 
The main task one must solve to apply our method is therefore a decomposition of the desired state space based on pseudo-ground states that can be handled with the cluster expansion (or related tools such as the Markov chains in~\cite{CGG+21}). 
Here, we show that spectral properties of the input graph can be exploited efficiently to give a decomposition into parts such that the set of pseudo-ground states given by colorings that make each part monochromatic, but allow different parts to be different colors, suffices to approximate $Z_G(\beta)$. 
To achieve this we use tools from spectral graph theory developed for other purposes in theoretical computer science~\cite{OT14}, refined for our purposes of approximating partition functions.

\subsection{Related work}

Galanis, Goldberg and Stewart~\cite{GGS21b} have generalized the approach of~\cite{JKP19} in a different direction, giving faster algorithms for general spin systems on regular bipartite expanders.
Our work shows that a sophisticated understanding of the spectrum of the underlying graph can be exploited in approximate counting problems such as the Potts model. 
We remark that spectral considerations are used extensively in approximate counting as Markov chains are important tools in the area, and bounding their mixing time can be done via spectral properties of the transition matrix. 
Our work does not rely on Markov chains and so is rather different from work of this kind, a notable exception being a result of Alev and Lau~\cite{AL20} which gives an algorithm for approximately counting the number of independent sets of a given size which exploits the absolute value of the smallest eigenvalue of a graph. 

In a very recent development, Jenssen, Perkins, and Potukuchi~\cite{JPP21a} significantly improved the algorithmic approach of~\cite{JKP19} for counting independent sets in bipartite graphs and weakened the necessary expansion condition. 
As an interesting application, they gave an algorithm for \BIS{} on regular graphs (with a polynomial approximation ratio) that runs in subexponential time when the degree of the graph grows with the number of vertices. 
This result is substantial evidence in support of a general subexponential-time algorithm for \BIS.

So far we have discussed our work in the context of several recent advances in algorithms using the cluster expansion from statistical physics, but here we give a broader picture of the literature.
Recently, the cluster expansion has been used to give an approximation algorithm for the ferromagnetic Potts model on integer lattice graphs at all temperatures (i.e.\ for all $\beta>0$)~\cite{BCH+20}. 
Since the present work first appeared as a preprint, Helmuth, Jenssen, and Perkins~\cite{HJP20} studied a generalization of the Potts model known as the \emph{random cluster model} and gave a detailed picture of the model on the class of bounded-degree graphs with both an expansion and a small-set expansion condition. 
They focus on the condition that for $\delta\in(0,1/2)$ and $\Delta\ge 3$, $G$ is of maximum degree $\Delta$ and the expansion profile satisfies $\phi_G(1/2)\ge 1/10$ and $\phi_G(\delta)\ge 5/9$ (though the precise constants can essentially be anything $\Omega(1)$ and $1/2+\Omega(1)$ respectively). 
Amongst several results characterizing the physical properties of the model, they give an FPTAS for $Z_G(\beta)$ when $G$ satisfies the above condition, for all $\beta>0$ when $q$ is large enough compared to $\delta$ and $\Delta$.
This result is not directly comparable to our \Cref{thm:sse} because we dispense with the rather strong assumption $\phi_G(1/2)\ge\Omega(1)$, which means our algorithm applies to a very different class of graphs. 
In short, for reasons related to work on Unique Games we wish to exploit a gap in spectrum of the graph wherever it may appear, whereas in~\cite{HJP20} the authors study an expansion condition met by the random regular graph. 
The random regular graph is a natural object to develop algorithms for, typically studied in physics due to its connections to the \emph{Bethe lattice} (also known as the infinite regular tree). 

\subsection{Further questions}

Our work hints at several further problems in approximate counting. 
It would be interesting to adapt our methods to counting independent sets in bipartite graphs directly, or rather approximation of the partition function of the hard-core model at low temperatures on bipartite graphs. 
The authors of~\cite{JKP19} show how to do this for expanders, but generalizing the methods to small-set expanders and the output of our Theorem~\ref{thm:partitioning} is a natural next step.

Our methods also present a challenge to spectral partitioning techniques for irregular graphs. On regular graphs the notions of $\alpha$-expansion and conductance coincide nicely, but some loss related to such a translation in irregular graphs is unfortunate in this work. 
The spectral partitioning techniques that proceed via the normalized Laplacian naturally feature volume and conductance instead of set size in the relevant places, and adapting these methods to fit together better is certainly desirable for our applications.

It would be very interesting to extend our techniques and attempt to give a general subexponential-time algorithm for a \BIS-hard problem, along the lines of the aforementioned work for Unique Games. 
Our techniques are a step towards this goal that show the spectrum and partitioning into expanders can be exploited in approximate counting. 
The main limitation of our method is that we require some properties of the partition that are hard to ensure in general: that there are no small parts and that the minimum degree inside parts is large. 
We suspect that a general method for handling `problematic' edges between parts in approximate counting problems can overcome the first of these limitations, and hope that our work motivates further research on this topic. 
In the case of Unique Games, `problematic' edges can be omitted without catastrophic effects. A more sophisticated approach seems to be necessary for the Potts model and other approximate counting problems.

\subsection{Organisation}

In Section~\ref{sec:cluster} we present an overview of our method, then give the details and show how Theorems~\ref{thm:regularsse} and~\ref{thm:sse} follow from the method and Theorem~\ref{thm:partitioning}.
In Section~\ref{sec:partitioning} we prove Theorem~\ref{thm:partitioning}, which completes the proofs of our main results.

\section{Approximating Potts partition functions}\label{sec:cluster}

In this section we prove \cref{thm:nobadparts}, and we choose to remove the restriction on part sizes by removing any edges that leave small parts. 
That is, we develop the following generalization of \cref{thm:nobadparts}.

\begin{theorem}\label{thm:Zalg}
    Suppose that we have a graph $G=(V,E)$ on $n$ vertices, of maximum degree $\Delta$ and of minimum degree $\delta$.
    Suppose also that for some $\ell\ge 1$ we have sets $P_1,\dotsc,P_\ell$ that form a $(\phiin,\phiout,\tau\delta)$-partition of $V$ such that for some $0\le s\le \ell$ we have $|P_i| < \eta n$ for $1\le i \le s$, and $|P_i|\ge \eta n$ for $s+1\le i\le \ell$. 
    Then for $\xi>0$ and 
    \[ \beta \ge \frac{4+2\log(q\Delta)}{\phiin\tau\delta\eta} \,,\]
    there is a deterministic algorithm giving a relative $((s+1)\xi+\beta s\phiout\Delta\eta n/2)$-approximation to $Z_G(\beta)$ in time at most
    \[ O\left(\left(s\eta^{O(\log(q\Delta))} + q^{\ell-s}\right)\Delta n\left(\frac{2n}{\xi}\right)^{O(\log(q\Delta))}\right) \,.\]
\end{theorem}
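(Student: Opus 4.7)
The plan is to lift the polymer-model and cluster-expansion framework of~\cite{JKP19} from a single expander to the partitioned setting. The key structural step is to cut the bad-part boundary edges and handle each resulting component separately. Set $V_{\lrg}:=\bigcup_{i>s}P_i$ and let $G^*$ be the graph obtained from $G$ by deleting every edge with an endpoint in $\bigcup_{i\le s}P_i$ that is not internal to a single bad part; the components of $G^*$ are then $G[P_1],\dotsc,G[P_s]$ and $G[V_{\lrg}]$. Since each removed edge can change $\beta m_G(\omega)$ by at most $\beta$ for any coloring~$\omega$, $\log(Z_G(\beta)/Z_{G^*}(\beta))$ lies in $[0,\beta\cdot|E(G)\setminus E(G^*)|]$; the midpoint estimate $Z_{G^*}(\beta)\cdot e^{\beta|E(G)\setminus E(G^*)|/2}$ therefore approximates $Z_G(\beta)$ to log-error at most $\beta|E(G)\setminus E(G^*)|/2\le\beta s\phiout\Delta\eta n/2$, the last step using $|\partial(P_i)|\le\phiout\vol(P_i)\le\phiout\Delta|P_i|<\phiout\Delta\eta n$ summed over $i\le s$ together with the double-counting identity $\sum_{i\le s}|\partial(P_i)|=|E(G)\setminus E(G^*)|+(\text{bad--bad cross edges})$.

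Because $Z_{G^*}(\beta)=\prod_{i=1}^{s}Z_{G[P_i]}(\beta)\cdot Z_{G[V_{\lrg}]}(\beta)$, it remains to produce a relative $\xi$-approximation to each of the $s+1$ factors; in log-scale the errors compound to the claimed $(s+1)\xi$ term. On each component I would apply the Jenssen--Keevash--Perkins recipe: a \emph{polymer} is a connected subgraph $\gamma\subseteq G[P_i]$ (for some $i$) together with a Potts-coloring of $\gamma$ that nowhere equals a fixed \emph{ground-state} color for $P_i$, two polymers are compatible iff non-adjacent, and $w(\gamma)=e^{-\beta\Delta m(\gamma)}$ encodes the net loss of monochromatic edges caused by $\gamma$. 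For $G[V_{\lrg}]$ I would enumerate ground-state vectors $\vec c\in[q]^{\ell-s}$ on the good parts, contributing the $q^{\ell-s}$ factor; for each bad part one enumerates its ground-state color $c\in[q]$ and exploits the small part size $|P_i|<\eta n$ to restrict polymer enumeration, contributing the $s\eta^{O(\log(q\Delta))}$ summand.

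The main technical obstacle is verifying the Koteck\'y--Preiss condition that secures convergence of the cluster expansion at the stated $\beta$-threshold. Using $\phi(G[P_i])\ge\phiin$ together with the minimum-degree guarantee $\vol_{G[P_i]}(v)\ge\tau\delta$, one obtains $|\partial_{G[P_i]}(\gamma)|\ge\phiin\tau\delta|\gamma|$ whenever $\vol_{G[P_i]}(\gamma)\le\vol(P_i)/2$; for polymers that nearly fill a part, the same inequality applied to $P_i\setminus\gamma$ introduces the factor $1/\eta$ because $\vol(P_i)$ may be as small as $\delta\eta n$ on a bad part. This yields convergence precisely at $\beta\ge(4+2\log(q\Delta))/(\phiin\tau\delta\eta)$, and the Patel--Regts / Helmuth--Perkins--Regts truncation then computes each $\log\Xi$ to relative error $\xi$ in time $\Delta n(2n/\xi)^{O(\log(q\Delta))}$. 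The $\eta^{-1}$ scaling in the threshold for $\beta$ directly reflects the cost of accommodating polymers that saturate a small part and appears unavoidable within this framework; the hard part will be tracking constants carefully enough in the Koteck\'y--Preiss bound on a polymer living in a small part so that the $\eta$ in the denominator of the threshold is obtained with the stated constants.
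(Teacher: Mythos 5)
Your high-level plan matches the paper's proof: cut the edges incident to bad parts, approximate the partition function of each resulting piece separately (Jenssen--Keevash--Perkins on each $G[P_i]$ with $i\le s$, a pseudo-ground-state polymer model on the good-part union with $q^{\ell-s}$ ground states), and combine via the midpoint estimate $e^{\beta X/2}Z_{G^*}(\beta)$, getting the $(s+1)\xi$ term from compounding $s+1$ relative $\xi$-errors and $\beta s\phiout\Delta\eta n/2$ from $X\le\sum_{i\le s}|\partial(P_i)|$. This is exactly the paper's strategy, and your edge-removal bookkeeping is correct.

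However, your account of where the $\eta$ in the $\beta$-threshold comes from is off. You attribute it to the Koteck\'y--Preiss bound for "polymers that nearly fill a part," but the paper's polymers are by definition \emph{small} (at most half of each part), so no polymer nearly fills a part, and the KP condition there holds already at $\beta\ge(4+2\log(q\Delta))/(\phiin\tau\delta)$ with no $\eta$. The $1/\eta$ instead enters twice in the \emph{outer} approximation argument for the good-part graph: (i) bounding $Z-Z^*$, the contribution of states not close to any ground state, and (ii) bounding $\tilde Z^\psi-Z^\psi$, the contribution of sparse-but-not-small disagreement sets. Both bounds rest on the fact that every good part has $|P_i|\ge\eta n$, so these states create $\ge\alpha\eta n/2$ bichromatic edges; this is also where the minimum part size assumption is used and why the hypothesis degrades as $\eta\to 0$. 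Relatedly, you describe polymers as connected subgraphs of a single $G[P_i]$, but the paper's polymers for the good-part union are small connected subsets of $G'$ that may span several parts; restricting to single-part polymers would break the bijection between sparse sets and compatible polymer families that underpins the identity $\tilde Z^\psi_G(\beta)=\sum_{U\text{ sparse}}\sum_{\omega\in\Omega(U,\psi)}e^{\beta m_G(\omega)}$, so that part of the proposal needs correcting to follow the paper's construction.
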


We call the sets $P_1,\dotsc,P_s$ which each contain less than $\eta n$ vertices \emph{bad parts}, and the remaining sets $P_{s+1},\dotsc,P_\ell$ \emph{good parts}. 
The details of why a small part is `bad' emerge in the proof, but it essentially arises because of a tension in the needs for (a) the partition function to be dominated by the contribution from colorings close to a pseudo-ground state, and for (b) a method of approximating this contribution for each pseudo-ground state. 
If one relaxes the definition of `close' then one needs stronger approximation methods as there are more states that must be captured as close to a given pseudo-ground state. 
Here our pseudo-ground states are those in which each part is monochromatic, and any state in which each part has a majority color is close to the unique pseudo-ground state which uses only that majority color on each part. 
The majority color definition simultaneously avoids any annoyances arising from colorings being close to multiple pseudo-ground states, and more importantly allows us to exploit the expansion assumption in each part of the partition. 
It is unclear how to improve these methods to capture, e.g.\ a coloring which splits the colors red, blue, and green equally amongst vertices of the smallest part, colors all the other parts red.

The proof proceeds as follows. 
For $1\le i\le s$, we compute a relative $\xi$-approximation $\hat Z_i$ to $Z_{G[P_i]}(\beta)$. 
Since $G[P_i]$ is an expander in the sense that $\phi(G[P_i])\ge \phiin$, we can use the Potts model result of~\cite{JKP19} directly to achieve this efficiently.

Now let $G'$ be the subgraph of $G$ graph induced by the good parts.
We adapt the method of~\cite{JKP19} to work with $G'$ which is supplied with a $(\phiin, \phiout, \tau\delta)$-partition $P_{s+1},\dotsc,P_\ell$, and such that each part has size at least $\eta n$. 
That is, we define a polymer model to represent the contribution to $Z_{G'}(\beta)$ from states close to a pseudo-ground state in which each $P_i$ is monochromatic. 
We use the cluster expansion to approximate the partition function of each polymer model, and show that an appropriate sum of these yields a relative $\xi$-approximation $\hat Z'$ to $Z_{G'}(\beta)$.

Putting the pieces together, we have an approximate partition function for the subgraph $G''$ of $G$ obtained by removing from $G$ the edges $\partial(P_i)$ for $1\le i\le s$. 
Since we have a $\phiout$ guarantee for the partition $P_1,\dotsc,P_\ell$ we can bound the number of such edges, and then use
\[ \hat Z := \hat Z' \cdot \prod_{1\le i\le s}\hat Z_i \]
as an approximation of $Z_G(\beta)$ that is good enough to prove Theorem~\ref{thm:Zalg}.

The partition $P_1,\dotsc,P_\ell$ that we assume for Theorem~\ref{thm:Zalg} has the guarantee that $\phi(G[P_i])\ge \phiin$. 
It is convenient to restate this in terms of a boundary condition akin to the one given in~\cite{JKP19}.
Recall that we say the graph $G$ is an \emph{$\alpha$-expander} if $|\partial(S)|\ge \alpha|S|$ for all $S$ with $|S|\le |V(G)|/2$, and note the following fact to relate $\alpha$-expansion and conductance.

\begin{fact}\label{fact:expander}
    Let $G=(V,E)$ be a graph of minimum degree $\delta$.
    Then $G$ is an $\alpha$-expander for $\alpha=\delta\phi(G)$.
\end{fact}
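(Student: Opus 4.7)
The plan is to reduce the $\alpha$-expander inequality $|\partial(S)|\ge \alpha|S|$, which is required for all $S$ with $|S|\le|V|/2$, to the conductance bound $|\partial(T)|\ge \phi(G)\vol(T)$, which by definition of $\phi(G)$ holds for all $T$ with $\vol(T)\le\vol(V)/2$. The subtle point is that $|S|\le|V|/2$ does not by itself guarantee $\vol(S)\le\vol(V)/2$, so I would handle the two possibilities separately and use the symmetry $\partial(S)=\partial(V\setminus S)$ when necessary.

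Fix $S\subseteq V$ with $|S|\le|V|/2$. In the first case $\vol(S)\le\vol(V)/2$, the definition of $\phi(G)$ gives $|\partial(S)|\ge\phi(G)\vol(S)$, and the minimum-degree hypothesis gives $\vol(S)=\sum_{v\in S}\deg(v)\ge\delta|S|$. Multiplying yields $|\partial(S)|\ge\delta\phi(G)|S|$ directly.

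In the second case $\vol(S)>\vol(V)/2$, set $T=V\setminus S$, so $\vol(T)\le\vol(V)/2$ and therefore $|\partial(T)|\ge\phi(G)\vol(T)\ge\delta\phi(G)|T|$. Since $|S|\le|V|/2$ we have $|T|\ge|S|$, and since boundaries are symmetric $|\partial(S)|=|\partial(T)|$, so $|\partial(S)|\ge\delta\phi(G)|T|\ge\delta\phi(G)|S|$, finishing the argument.

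I do not anticipate any real obstacle; the only substantive issue is the mismatch between the size-based threshold $|V|/2$ in the expander definition and the volume-based threshold $\vol(V)/2$ in the conductance definition, and the two-case split above absorbs this cleanly using the identity $\partial(S)=\partial(V\setminus S)$ and the trivial bound $\vol(S)\ge\delta|S|$.
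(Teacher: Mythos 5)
Your proof is correct and is essentially the same as the paper's: both split on whether $\vol(S)\le\vol(V)/2$, use the trivial bound $\vol(\cdot)\ge\delta|\cdot|$, and in the overflow case pass to $V\setminus S$ via the symmetry $|\partial(S)|=|\partial(V\setminus S)|$ together with $|V\setminus S|\ge|S|$.
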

\begin{proof}
Let $|S|\le |V|/2$, and note that this means $|S|\le|V\setminus S|$. 
If $\vol(S)\le \vol(V)/2$ then we have $\phi(S)\ge \phi(G)$, and so 
\[ |\partial(S)| = \phi(S)\vol(S) \ge \phi(G)\delta|S|\,. \]
If instead we have $\vol(S) > \vol(V)/2$, then $\vol(V\setminus S)\le\vol(V)/2$. This means
\[ |\partial(S)| = |\partial(V\setminus S)| \ge \phi(G)\delta|V\setminus S|\ge \phi(G)\delta|S|\,. \qedhere \]
\end{proof}

In the proof of Theorem~\ref{thm:Zalg} we can use this fact to show that each $G[P_i]$ is an $\alpha$-expander with $\alpha=\phiin\tau\delta$, because $G[P_i]$ has $\phi(G[P_i])\ge\phiin$ and minimum degree at least $\tau\delta$.

Our main tool (in addition to the partitioning result Theorem~\ref{thm:partitioning}) is the cluster expansion for abstract polymer models. 
An abstract polymer model is a collection of polymers $\gam$, a weight $w_\gam$ for each polymer, and a compatibility relation on polymers. 
To be concrete, when approximating the Potts partition function of a graph $G$ our polymers are (special) connected subsets of vertices of $G$, polymers are compatible if they are at distance greater than 1 in $G$, and the weight is some natural weighted sum over colorings of the polymer.
The partition function of an abstract polymer model is $\Xi := \sum_\Gamma\prod_{\gamma\in\Gamma}w_\gamma$, where the sum is over all finite sets $\Gam$ of mutually compatible polymers. 
The cluster expansion is the formal series in the weights given by
\[ \log \Xi = \sum_{\Gamma'}\ursell(H(\Gamma'))\prod_{\gamma\in\Gamma'}w_\gamma\,, \]
where this time the sum is over \emph{clusters} $\Gam'$ which are ordered, finite lists of polymers whose graph formed by the incompatibility relation is connected, and the term $\ursell(H(\Gamma'))$ is a combinatorial term depending on the compatibility graph induced by the cluster.
There are standard convergence criteria that guarantee this series is convergent and does so rapidly, and our algorithms use these criteria to show that a suitably truncated cluster expansion gives the desired relative approximation to $\Xi$. 
See e.g.~\cite{JKP19} or~\cite{HPR19a} for more details on the use of the cluster expansion to design algorithms.
We will not require detailed knowledge of the terms of the cluster expansion, it suffices to apply existing packaging of the necessary ideas from e.g.~\cite{JKP19}.

\subsection{Dealing with the bad parts}

One of the main results of~\cite{JKP19} is the following approximation algorithm for the low-temperature ferromagnetic Potts model on $\alpha$-expander graphs of bounded degree.

\begin{theorem}[Jenssen, Keevash, and Perkins~\cite{JKP19}]\label{thm:JKP}
	Let $G$ be an $\alpha$-expander on $n$ vertices of maximum degree $\Delta$. 
	Then for the $q$-color ferromagnetic Potts model on $G$ with 
	\[ \beta\ge \frac{4+2\log(q\Delta)}{\alpha}\,, \]
	there is a deterministic algorithm that produces a relative $\xi$-approximation to the partition function $Z_G(\beta)$ in time at most $O\big(\Delta n(2n/\xi)^{O(\log(q\Delta))}\big)$.
\end{theorem}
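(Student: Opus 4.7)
The plan is to represent $Z_G(\beta)$ via an abstract polymer model built around each of the $q$ monochromatic ground states, verify a Kotecký--Preiss convergence criterion using the $\alpha$-expansion hypothesis, and then compute a truncated cluster expansion deterministically by enumerating small connected subgraphs in the spirit of Patel--Regts.

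For each colour $r\in[q]$ I would define an $r$-polymer as a pair $\gamma=(H_\gamma,\sigma_\gamma)$ where $H_\gamma$ is a connected vertex-induced subgraph of $G$ with $|V(H_\gamma)|\le n/2$ and $\sigma_\gamma\colon V(H_\gamma)\to [q]\setminus\{r\}$; two $r$-polymers are compatible when their vertex sets lie at graph distance at least two in $G$. The natural weight relative to the all-$r$ ground state is
\[ w_\gamma = \exp\bigl(\beta\, m_{H_\gamma}(\sigma_\gamma) - \beta|\nabla(V(H_\gamma))|\bigr), \]
since recolouring $V(H_\gamma)$ by $\sigma_\gamma$ destroys all $|\nabla(V(H_\gamma))|$ previously-monochromatic edges incident to $V(H_\gamma)$ and introduces $m_{H_\gamma}(\sigma_\gamma)$ new monochromatic edges inside. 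A dominant-colour partition of $[q]^V$, assigning each colouring to a single ground state by a tie-breaking convention, produces an identity
\[ Z_G(\beta) = e^{\beta|E|}\sum_{r=1}^q \Xi_r + \mathcal R, \qquad \Xi_r := \sum_\Gamma \prod_{\gamma\in\Gamma} w_\gamma, \]
in which $\mathcal R$ collects the colourings in which no colour is dominant.

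Next I would verify the Kotecký--Preiss criterion
\[ \sum_{\gamma'\not\sim\gamma} |w_{\gamma'}|\, e^{|V(H_{\gamma'})|}\le |V(H_\gamma)| \]
for each $\Xi_r$. A polymer of size $t\le n/2$ has $|\partial V(H_\gamma)|\ge \alpha t$ by $\alpha$-expansion, so $|w_\gamma|\le (q-1)^t e^{-\alpha\beta t}$, while the number of connected subgraphs of $G$ of size $t$ incident to any fixed vertex is at most $(e\Delta)^t$. Pinning each $\gamma'$ to a vertex at distance at most one from $V(H_\gamma)$ reduces the KP sum to a geometric series in $t$ whose convergence requires $\alpha\beta$ to exceed $\log(q\Delta)+O(1)$; the hypothesis $\beta\ge (4+2\log(q\Delta))/\alpha$ is tailored to provide this with a comfortable margin that also bounds $\mathcal R$ by an exponentially small multiple of $Z_G(\beta)$. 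With KP convergence the cluster expansion $\log\Xi_r=\sum_{\Gamma'}\phi(\Gamma')\prod_{\gamma\in\Gamma'} w_\gamma$ converges geometrically, so truncating to clusters of total vertex support at most $m=O(\log(qn/\xi))$ approximates $\log\Xi_r$ to additive error $\xi/(2q)$. To compute the truncated expansion deterministically I would enumerate all connected subgraphs of $G$ of size up to $m$ rooted at each vertex, of which there are at most $n(e\Delta)^m$, sum over the $(q-1)^{O(m)}$ colourings of each, and assemble these into clusters using their pairwise incompatibility graph. Exponentiating, summing the $q$ approximations, and multiplying by $e^{\beta|E|}$ gives a relative $\xi$-approximation to $Z_G(\beta)$ in the stated running time $O\bigl(\Delta n(2n/\xi)^{O(\log(q\Delta))}\bigr)$.

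The main obstacle I anticipate is getting the decomposition $Z_G=e^{\beta|E|}\sum_r\Xi_r+\mathcal R$ exactly right. A single colouring may have several plausible dominant colours, so one must choose a tie-breaking rule under which the non-dominant vertices in any legitimate colouring form precisely a configuration of mutually compatible polymers and each colouring is counted exactly once. The expansion and large-$\beta$ hypotheses then make $\mathcal R$ negligible, but one must verify carefully that the weights $w_\gamma$ as defined really reconstruct $e^{\beta m_G(\omega)}$ under this convention before the KP machinery and the algorithmic enumeration can be safely invoked.
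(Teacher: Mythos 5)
Theorem~\ref{thm:JKP} is quoted from Jenssen, Keevash, and Perkins~\cite{JKP19} rather than proved in this paper, but the paper proves a close generalisation (Theorem~\ref{thm:nobadparts}) by the same polymer/cluster-expansion machinery, and your outline tracks that argument: polymers are connected subsets of size at most $n/2$ decorated by non-ground-state colourings, the $\alpha$-expansion hypothesis drives the Kotecký--Preiss verification via the $(e\Delta)^t$ connected-subgraph bound, and the algorithm is the enumerate-and-truncate scheme of Theorem~\ref{thm:JKPconvergence}.

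The one genuine gap is the claimed identity $Z_G(\beta) = e^{\beta|E|}\sum_r \Xi_r + \mathcal{R}$ with $\mathcal{R}$ collecting colourings with no dominant colour. You flag this yourself, but the obstruction is not tie-breaking: no assignment rule can make this an identity while $\Xi_r$ remains an abstract polymer model. By the usual bijection between compatible polymer families and their unions, $e^{\beta|E|}\Xi_r$ equals \emph{exactly} the sum of $e^{\beta m_G(\omega)}$ over all $\omega$ whose set $U_r(\omega)$ of non-$r$-coloured vertices has every $G$-connected component of size at most $n/2$ (the ``sparse'' sets in the paper's terminology), with no constraint on $|U_r(\omega)|$ itself. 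These sets of colourings overlap across $r$ — in $K_n$ with $n$ even, a balanced two-colouring lies in a single polymer for both $\Xi_1$ and $\Xi_2$ — so $\sum_r e^{\beta|E|}\Xi_r$ strictly over-counts $Z_G(\beta)$, and $\mathcal{R}$ would have to be negative. If instead you restrict $\Xi_r$ to configurations realising a dominant-$r$ colouring, you destroy the product structure that the cluster expansion needs. The correct structure, used in the paper, is two separate multiplicative approximations: (i) the sum $Z^*_G(\beta) = \sum_\psi Z^\psi_G(\beta)$ over colourings \emph{close} to a unique ground state (a disjoint decomposition) is a lower bound on $Z_G(\beta)$, and (ii) each $e^{\beta|E|}\Xi^\psi$ is an upper bound on $Z^\psi_G(\beta)$, the excess coming from sparse-but-not-small deviation sets. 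Expansion bounds both defects by roughly $q^n e^{-\beta\alpha n/2}$, which the hypothesis on $\beta$ makes $O(e^{-n})$. With that reframing your KP verification and algorithmic steps go through as in the paper; a minor slip is that the $(q-1)^t$ factor should arise from summing the KP condition over decorated polymers of a given support, not as a bound on a single $|w_\gamma|$.
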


For $1\le i\le s$, each $G[P_i]$ is a $\phiin\tau\delta$-expander on less than $\eta n$ vertices, and has maximum degree at most $\Delta$.
Via the above result we can obtain a relative $\xi$-approximation $\hat Z_i$ to $Z_{G[P_i]}(\beta)$ when 
\[ \beta\ge \frac{4+2\log(q\Delta)}{\phiin\tau\delta}\,, \]
and to do this for $1\le i\le s$ takes total time at most 
\[ T_\sml = O\big(s \Delta \eta n(2\eta n/\xi)^{O(\log(q\Delta))}\big)\,.\]

\subsection{Dealing with the good parts}

Here we develop a generalization of the methods of~\cite{JKP19} that were used to prove Theorem~\ref{thm:JKP}. 
In the course of the proof we will see that the minimum size of a part in the partition affects the range of $\beta$ for which our argument works. 
To avoid defining separate notation for a graph and partition with no bad parts, in this subsection we work with some graph $G$ on $n$ vertices with a partition $P_1,\dotsc,P_\ell$ that has no bad parts. 
Then when we apply the result below as a step in the proof of Theorem~\ref{thm:Zalg}, we will apply it with $G=G'$, $\ell=\ell'-s$, some $n'\le n$, and other changes in notation. 
That means this section yields a direct proof of \cref{thm:nobadparts}.

We say that a set $U\subset V$ is \emph{small} if for all $i\in[\ell]$ we have $|U\cap P_i| \le |P_i|/2$. 
Let $\alpha := \phiin\tau\delta$, so that by Fact~\ref{fact:expander} we have that each $G[P_i]$ is an $\alpha$-expander.
A small set has a convenient `partition expansion' guarantee because the condition $|U\cap P_i|\le|P_i|/2$ implies that for all $i$ we have $|\partial_{G[P_i]}(U\cap P_i)|\ge\alpha$, where we use a subscript on the $\partial$ to denote the graph in which we take the boundary. 
For brevity, we will write $U_i := U\cap P_i$, and $\partial_i$ for $\partial_{G[P_i]}$.  

We say that a set $U\subset V$ is \emph{sparse} if each of the connected components of $G[U]$ is small, noting that any small set is necessarily sparse.
When $\Gam$ is the set of connected components of $G[U]$, for all $i\in[\ell]$ we have,
\[ \partial_i(U_i) = \bigcup_{\gam\in\Gam}\partial_i(\gam_i)\,, \]
where $\gam_i = \gamma \cap P_i$ and the union is disjoint since there are no edges of $G$ between the sets $\gam$.
Then if $U$ is sparse we can count for each $i$ the edges leaving $U$ inside $G[P_i]$,
\[
|\partial_i(U_i)| = \sum_{\gam\in\Gam}|\partial_i(\gam_i)| \ge \alpha\sum_{\gam\in\Gam}|\gam_i| = \alpha|U_i|\,,
\]
showing that sparse sets also have a convenient `partition expansion' property.

Recall that the $q$-color ferromagnetic Potts model on $G$ is given by the partition function
\[
Z_{G}(\beta) = \sum_{\omega : [q]^V} e^{\beta m_{G}(\omega)}\,,
\]
where $m_{G}(\omega)$ counts the number of monochromatic edges of $G$ under the coloring $\omega$. 
We refer to the colorings $\omega$ as states, and note that the states giving all vertices of $G$ the same color have the highest possible contribution to the partition function, $e^{\beta |E|}$. 
A state with this maximum contribution is usually called a \emph{ground state}, but we are interested in a slightly more flexible notion of ground state.
We consider any coloring such that each $P_i$ is monochromatic a ground state, and note that when there are few edges crossing the partition these states all contribute to the partition function a term close to the maximum.
We hope the reader will allow this abuse of terminology, as writing pseudo-ground throughout this section seems overly verbose.

Our goal is to show that $Z_{G}(\beta)$ is well-approximated by the contribution from states close to a ground state, and to show that we can efficiently approximate each such contribution with the cluster expansion. 
Both of these steps inevitably require large $\beta$. 
It is simply false that the partition function is dominated by states close to a ground state for small $\beta$, and we will use large $\beta$ when appealing to a standard condition that guarantees our series approximation via the cluster expansion is convergent. 

Let $\Psi$ be the set of ground states, noting that $|\Psi|= q^{\ell}$. 
When $\psi:V\to[q]$ is a ground state, i.e.\ when $|\psi(P_i)|=1$ for all $i\in[\ell]$, we write $\psi_i$ for the unique color such that $\psi(P_i)=\{\psi_i\}$.
We say that a state $\omega$ is \emph{close} to $\psi$ if, for all $i\in[\ell]$ we have $|\omega^{-1}(\psi_i)\cap P_i| > |P_i|/2$, and note that each $\omega$ is close to at most one ground state.
We write $\Omega:=[q]^V$ for the set of all states, $\Omega^\psi$ for the set of states close to $\psi$, and $\Omega^*:=\bigcup_{\psi\in\Psi}\Omega^\psi$.

\subsubsection{Approximation by states close to a ground state}

Let $Z^*_{G}(\beta) := \sum_{\omega\in\Omega^*}e^{\beta m_{G}(\omega)}$ be the contribution to $Z_{G}(\beta)$ from states close to any ground state, and for a specific ground state $\psi$ let $Z^\psi_{G}(\beta) := \sum_{\omega\in\Omega^\psi}e^{\beta m_{G}(\omega)}$ be the contribution from states close to $\psi$.

We show that $Z^*_{G}(\beta)$ is close to $Z_G(\beta)$ for large enough $\beta$.
If $\omega\in\Omega\setminus\Omega^*$ then there is some $i\in[\ell]$ such that for all colors $c\in[q]$ we have $\omega^{-1}(c)_i \le |P_i|/2$.
Note that for every color $c$, $\partial_i(\omega^{-1}(c)_i)$ consists entirely of bichromatic edges, and hence $P_i$ contains at least
\begin{align*}
\frac{1}{2}\sum_{c\in[q]}|\partial_i(\omega^{-1}(c)_i)| 
\ge \frac{1}{2}\sum_{c\in[q]}\alpha|\omega^{-1}(c)_i|
= \frac{1}{2}\alpha |P_i|
\ge \frac{\alpha\eta n}{2}
\end{align*}
bichromatic edges.
Then $e^{\beta m_{G}(\omega)} \le e^{\beta(|E|-\alpha\eta n)}$. This means
\[
Z_{G}(\beta) - Z_{G}^*(\beta) = \sum_{\omega\in\Omega\setminus\Omega^*}e^{\beta m_{G}(\omega)} \le q^n e^{\beta(|E|-\alpha\eta n)}\,.
\]
Then the fact that $Z_{G}(\beta) \ge qe^{\beta|E|}$ means that
\[
0 \le 1-\frac{Z^*_{G}(\beta)}{Z_{G}(\beta)} \le q^{n-1}e^{-\beta \alpha\eta n}\,.
\]
When $\beta \ge 2\log(eq)/(\alpha\eta)$ we have
\[
e^{e^{-n}} > 1 \ge \frac{Z^*_{G}(\beta)}{Z_{G}(\beta)} \ge 1-e^{-n}/q \ge e^{-e^{-n}}\,,
\]
and hence $Z^*_G(\beta)$ is a relative $e^{-n}$-approximation to $Z_{G}(\beta)$ (for the last inequality we need $q\ge 2$ and $n\ge 1$).
Note that this step is making crucial use of the minimum part size guarantee. 
If the minimum part size was $o(n)$ (instead of $\eta n$), then the lower bound on $\beta$ would tend to infinity with $n$, which is undesirable.

\subsubsection{Approximation by a polymer model}

Since $Z^*_{G}(\beta)=\sum_{\psi\in\Psi}Z^\psi_{G}(\beta)$ is a good approximation to $Z_{G}(\beta)$, it will suffice to obtain good approximations to each $Z^\psi_{G}(\beta)$ and take their sum.
We do this with a polymer model and cluster expansion for each $\psi$.

Let a \emph{polymer} be a set $\gam\subset V$ such that $G[\gam]$ is connected and $\gam$ is small, $|\gamma \cap P_i| \le |P_i|/2$ for all $i$.
Two polymers $\gam$ and $\gam'$ are \emph{compatible} if they are disjoint sets of vertices such that $\partial(\gam) \cap \partial(\gam') = \emptyset$.
We write $\cC$ for the set of polymers and $\cG$ for the family of sets of mutually compatible polymers.

Now fix a ground state $\psi$.
For a set $U\subset V$, let $\Omega(U,\psi)$ be the set of states $\omega\in\Omega$ such that for all $v\in U$ we have $\omega(v)\ne\psi(v)$ and for all $v\in V\setminus U$ we have $\omega(v)=\psi(v)$.
That is, for $\omega\in \Omega(U,\psi)$ the set $U$ encodes the places where $\omega$ and $\psi$ differ.
We also write $\Lam(U,\psi)$ for the set of colorings $\lam:U\to[q]$ such that for all $v\in U$ we have $\lambda(v)\ne\psi(v)$.
We change letters for this definition to signify that $\lam$ is a coloring of $U$ alone, and not all of $V$.

Then for $\lam\in\Lam(U,\psi)$ let $m_{G}(\psi,U,\lam)$ be the number of edges in $G$ which have at least one endpoint in $U$ that are monochromatic when $U$ is colored by $\lam$ and any vertex at graph distance exactly one from $U$ is colored by $\psi$.
We write
\[
R^\psi(U,\beta) := \sum_{\lam\in\Lam(U,\psi)} e^{\beta m_{G}(\psi,U,\lam)}
\]
for the \emph{restricted partition function of $U$ with boundary conditions specified by $\psi$}.
Note that when $\Gam$ is the set of connected components of $G[U]$ we have
\[
e^{-\beta|\nabla(U)|}R^\psi(U,\beta) = \prod_{\gam\in\Gam}e^{-\beta|\nabla(\gam)|}R^\psi(\gam,\beta)\,,
\]
by the fact that there are no edges between the sets in $\Gam$. 
This lack of edges between the $\gam\in\Gam$ also permits the simultaneous imposition of boundary conditions for each $\gam$ without conflict. 
The key point of these definitions is that with weights
\[
w_\gam := e^{-\beta|\nabla(\gam)|}R^\psi(\gam,\beta)\,,
\]
and the polymer model partition function $\Xi^\psi :=  \sum_{\Gam\in\cG}\prod_{\gam\in\Gam}w_\gam$ we will be able to show that
\[
\tilde Z^\psi_{G}(\beta) := e^{\beta m_{G}(\psi)} \Xi^\psi
\]
is a good approximation of $Z^\psi_G(\beta)$.

Fix an arbitrary subset $U\subset V$, a state $\omega\in\Omega(U,\psi)$, and let $\Gam$ be the set of connected components of $G[U]$.
Then every edge that does not intersect $U$ is colored the same under $\omega$ and $\psi$, so
\begin{equation}\label{eq:mGexpr}
	\begin{split}
m_{G}(\omega) &= m_{G}(\psi) - |\nabla(U)| + m_{G}(\psi,U,\omega|_U) 
\\&= m_G(\psi) - \sum_{\gam\in\Gam}|\nabla(\gam)| + \sum_{\gam\in\Gam}m_{G}(\psi, \gam, \omega|_\gam)\,.
	\end{split}
\end{equation}
In the case that $U$ is sparse, each component of $U$ is small and hence forms a polymer.
Then there is a one-to-one correspondence between sparse subsets $U\subset V$ and $\Gam\in\cG$ (given by $U=\bigcup_{\gam\in\Gam}\gam$) and so
\[
\tilde Z^\psi_{G}(\beta) = e^{\beta m_{G}(\psi)}\sum_{\Gam\in\cG}\prod_{\gam\in\Gam}e^{-\beta|\nabla(\gam)|}R^\psi(\gam,\beta) = \sum_{U\text{ sparse}}\sum_{\omega\in\Omega(U,\psi)}e^{\beta m_{G}(\omega)}\,.
\]
But we also have
\[
Z^\psi_{G}(\beta) = \sum_{U\text{ small}}\sum_{\omega\in\Omega(U,\psi)}e^{\beta m_{G}(\omega)}\,,
\]
because $\omega$ being close to $\psi$ means that the set $U$ where $\omega$ and $\psi$ differ satisfies $|U_i|\le |P_i|/2$ for all $i\in[\ell]$, i.e.\ $U$ is small.
Recall that a small set is sparse, which means
\begin{equation}\label{eq:clusterdiff}
\tilde Z^\psi_{G}(\beta)-Z^\psi_{G}(\beta) = \sum_{\substack{U\text{ sparse,}\\\text{not small}}}\; \sum_{\omega\in\Omega(U,\psi)}e^{\beta m_{G}(\omega)}\,.
\end{equation}

Hence, if $U$ is sparse we have for each index $i$ that $|\partial_i(U_i)|\ge \alpha|U_i|$, and hence for all pairs $(U,\omega)$ appearing in the double-sum in~\eqref{eq:clusterdiff}, and for and any index $i\in[\ell]$, there are at least $\alpha|U_i|$ bichromatic edges inside $P_i$ under $\omega$.
But every $U$ in the sum in~\eqref{eq:clusterdiff} is not small and hence there is some index $i\in[\ell]$ such that $|U_i| > |P_i|/2$. 
This means there are more than
\[ \alpha|P_i|/2 \ge \alpha\eta n /2 \]
bichromatic edges in $\nabla(U)$ under $\omega$.
That is, for all pairs $(U,\omega)$ appearing in the double-sum in~\eqref{eq:clusterdiff}, 
\[
m_{G}(\psi,U,\omega|_U) < |\nabla(U)| - \alpha\eta n/2\,,
\]
and hence via~\eqref{eq:mGexpr} we have $m_{G}(\omega) < m_{G}(\psi) - \alpha\eta n/2$.
Returning to~\eqref{eq:clusterdiff} this means
\[
0\le \tilde Z^\psi_{G}(\beta)-Z^\psi_{G}(\beta) \le q^{n} e^{\beta\big(m_{G}(\psi) - \alpha\eta n/2\big)}\,,
\]
because the number of $\omega$ that are in $\Omega(U,\psi)$ for some sparse, not small $U$ is (crudely) at most $q^{n}$.
We also have that $Z^\psi_{G}(\beta) \ge e^{\beta m_{G}(\psi)}$ so that
\[
1 \le \frac{\tilde Z^\psi_G(\beta)}{Z^\psi_G(\beta)} \le 1 + q^{n} e^{-\beta\alpha\eta n/2} \le 1+ e^{-n} \le  e^{e^{-n}}
\]
when $\beta\ge 2\log(eq)/(\alpha\eta)$.

\subsubsection{Convergence of the cluster expansion}
We seek to apply the following theorem stated by Jenssen, Keevash, and Perkins~\cite{JKP19}, with many variations used elsewhere, e.g.,~\cite{BCH+20} and the earlier~\cite{HPR19a}. 
The theorem concerns a surprisingly useful special case of abstract polymer models that appears in statistical physics and counting algorithms. The setting is that we have an ambient graph $G$, polymers are vertex subsets of $G$ which induce connected graphs, and the polymer compatibility relation is given by graph distance in $G$. Our application of abstract polymer models fits this special case, which is sometimes called a \emph{subset gas} or \emph{subset polymer model}.

\begin{theorem}\label{thm:JKPconvergence}
	Suppose that $G=(V,E)$ is a graph on $n$ vertices of maximum degree $\Delta$, and that the following hold for a polymer model associated to $G$ and some decay function $g(\cdot)$ on polymers, such that polymers are subsets of $V$ that induce connected subgraphs and polymers are compatible when they are at graph distance at least two.
	
	\begin{enumerate}
		\item Given set $\gam\subset V$ with $G[\gam]$ connected, determining whether $\gam$ is a polymer and computing $w_\gam$ can be done in time $e^{O(|\gam|)}$,
		\item there exists $\rho > 0$ so that for every polymer $\gam$, $g(\gam)\ge \rho|\gam|$, and
		\item the Kotecký--Preiss condition holds with the given function $g(\cdot)$ in the sense that for all polymers $\gam$,
		\[ \sum_{\gam' : d_G(\gam,\gam')\le 1} |w_{\gam'}|e^{|\gam'| + g(\gam')} \le |\gam| \,.\]
	\end{enumerate}
	Then there is a deterministic algorithm that gives a relative $\xi$-approximation to $\Xi$ in time 
	\[ O\Big( n \cdot (n/\eps)^{O(\log\Delta /\rho)}\Big)\,. \]
\end{theorem}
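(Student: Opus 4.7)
The plan is to approximate $\log \Xi$ by truncating the cluster expansion at clusters of bounded total weight, and then exponentiate to recover an approximation to $\Xi$. Writing $T_L$ for the partial sum $\sum_{\Gamma' : \|\Gamma'\|\le L} \phi(\Gamma') \prod_{\gamma\in\Gamma'} w_\gamma$, where $\|\Gamma'\|$ denotes the total size $\sum_{\gamma\in\Gamma'}|\gamma|$ of the polymers in the cluster, the Kotecký--Preiss condition is exactly what is needed to bound $|\log \Xi - T_L|$. Specifically, a standard consequence of hypothesis~3 is that the truncation error is bounded by $\sum_v \sum_{\Gamma' \ni \gamma,\, v\in\gamma,\, \|\Gamma'\|>L} |\phi(\Gamma')| \prod_{\gamma'\in\Gamma'}|w_{\gamma'}|e^{|\gamma'|}$, and a Penrose-tree style argument together with the decay $g(\gamma)\ge \rho|\gamma|$ shows that this tail is at most $n e^{-\rho L}$. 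Choosing $L = \Theta(\log(n/\xi)/\rho)$ therefore makes $|\log \Xi - T_L| \le \xi/2$, so that $e^{T_L}$ is a relative $\xi$-approximation to $\Xi$.

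Next I would turn to the algorithmic side, which amounts to two subtasks: enumerating all clusters with $\|\Gamma'\|\le L$, and for each enumerating all allowed polymers together with their weights. For the polymer enumeration, one enumerates connected subgraphs of $G$ of size at most $L$ rooted at each vertex $v$, of which there are at most $n(e\Delta)^L$ by a standard counting argument; hypothesis~1 lets us test polymer membership and compute $w_\gamma$ in time $e^{O(|\gamma|)}$ per candidate. For the cluster enumeration, one builds incompatibility-connected ordered tuples of polymers by a tree-search rooted at each polymer, using that compatibility is determined by graph distance at most one, so each polymer has at most $(\Delta+1)^{|\gamma|}$ potential neighbours in the incompatibility graph. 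The total count of clusters of total size $\le L$ touching a fixed vertex is therefore bounded by $(e\Delta)^{O(L)}$, and each cluster's contribution $\phi(\Gamma')\prod w_\gamma$ is computable in time $e^{O(L)}$.

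Combining, the running time is $n \cdot (e\Delta)^{O(L)} \cdot e^{O(L)} = n \cdot (n/\xi)^{O(\log \Delta/\rho)}$, which matches the claim. The main obstacle, and the conceptual heart of the proof, is establishing that the tail of the cluster expansion really is controlled by the Kotecký--Preiss condition in the vertex-by-vertex way needed to get an $n$-factor rather than a $\Xi$-factor out front; this rests on the standard inductive proof of the Kotecký--Preiss theorem, which gives the bound $\sum_{\Gamma'\ni\gamma} |\phi(\Gamma')|\prod_{\gamma'\in\Gamma'}|w_{\gamma'}|e^{g(\gamma')}\le |\gamma|$ and from which the truncation estimate above follows by summing over the first polymer of each cluster and peeling off one layer using $g(\gamma)\ge\rho|\gamma|$. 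The enumeration time bounds are by now routine via the Patel--Regts framework, so no new idea is needed there beyond verifying that the incompatibility structure (graph distance $\le 1$) fits that framework.
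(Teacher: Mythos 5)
The paper does not prove this theorem itself; it states it as a result of Jenssen, Keevash, and Perkins~\cite{JKP19} and applies it as a black box. Your sketch — truncate the cluster expansion at total size $L=\Theta(\log(n/\xi)/\rho)$, use the Kotecký--Preiss conclusion (summed vertex-by-vertex and peeled with $g(\gamma)\ge\rho|\gamma|$) to bound the tail by $ne^{-\rho L}$, and then enumerate polymers and clusters of bounded total size via the $(e\Delta)^{t}$ connected-subgraph bound — is exactly the argument given in~\cite{JKP19} (and the Helmuth--Perkins--Regts framework it draws on), so it is correct and faithful to the cited proof.
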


The first condition is straightforward for our polymer models. 
Given a set $\gam\subset V$ that induces a connected subgraph, we have to count the size of each $\gam_i$ to decide whether $\gam$ is a polymer. 
This can be done in time $O(|\gam|)$. 
To compute the weight $w_\gam$ we sum over the $(q-1)^{|\gam|}$ colorings of $\gam$ that appear in the sum giving $R^\psi(\gam,\beta)$ and count monochromatic edges.
This takes time at most $(q-1)^\Delta \cdot \Delta|\gam| = e^{O(\Delta|\gam|\log q)}$.
We will take $g(\gam)=|\gam|$ so the second condition is satisfied with $\rho=1$.
For the final condition we exploit the fact that every polymer $\gam$ is small and hence expands in each $G[P_i]$, and that every edge in $\partial_i(\gam_i)$ is bichromatic for any coloring $\lam$ considered in the sum giving $R^\psi(\gam,\beta)$.
Then
\[
w_\gam \le (q-1)^{|\gam|}e^{-\beta\alpha|\gam|}\,,
\]
so it suffices to show that
\begin{equation}\label{eq:KPcalc}
\sum_{\gam' : d(\gam,\gam')\le 1} e^{(2-\beta\alpha+\log(q-1))|\gam'|} \le |\gam|\,.
\end{equation}
We would be done if we could prove that for each $v\in V$ we have
\[
\sum_{\gam'\in\cC \,:\, v\in\gam'}e^{(2-\beta\alpha+\log(q-1))|\gam'|} \le  \frac{1}{\Delta+1}\,,
\]
because by summing this inequality over all vertices $v$ at distance at most $1$ from $\gam$, of which there are at most $(\Delta+1)|\gam|$, we have~\eqref{eq:KPcalc}.
For this argument we use that fact~\cite[Lemma 2.1]{GK04} that there are at most $(e\Delta)^t$ connected, induced subgraphs of $G$ on $t$ vertices that contain any fixed $v\in V$.
Then
\[
\sum_{\gam'\in\cC : v\in\gam'}e^{(2-\beta\alpha+\log(q-1))|\gam'|} \le \sum_{t=1}^\infty e^{(3-\beta\alpha+\log(q-1)+ \log \Delta)t} = \frac{e^a}{1-e^a}\,,
\]
where $a := 3-\beta\alpha+\log(q-1)+ \log \Delta$. 
Then the above sum is at most $1/(\Delta+1)$ when $a \le -\log(\Delta+2)$, which holds e.g.\
when $\beta \ge (4+2\log(q\Delta))/\alpha$.
For such $\beta$ we have via Theorem~\ref{thm:JKPconvergence} an FPTAS for $\tilde Z^\psi_{G}(\beta)$ that yields a relative $\xi$-approximation $\hat Z^\psi_{G}$ to $\tilde Z^\psi_G(\beta)$ in time 
\[ O\left( n \cdot (n/\xi)^{O(\log \Delta)}\right)\,. \]

\subsubsection{Finishing the argument for good parts}

In the pursuit of Theorem~\ref{thm:nobadparts}, we have shown that when 
\[ \beta \ge \frac{2\log(eq)}{\alpha\eta}\,, \] 
the function $Z^*_{G}(\beta)$ is a relative $e^{-n}$-approximation of $Z_{G}(\beta)$.
Recall that $Z^*_{G}(\beta)$ is the sum of $q^{\ell}$ terms $Z^\psi_{G}(\beta)$, and for the same condition on $\beta$ we have for each $\psi\in\Psi$ the function $\tilde Z^\psi_G(\beta)$ which is a relative $e^{-n}$-approximation of $Z^\psi_G(\beta)$.
Finally, when 
\[
\beta \ge \frac{4+2\log(q\Delta)}{\alpha}\,,
\] 
we have an algorithm to approximate each $\tilde Z^\psi_G(\beta)$.

We now turn to the algorithm for approximating $Z_{G}(\beta)$ when 
\[ \beta \ge \frac{4+2\log(q\Delta)}{\eta\alpha}\,,\] 
as in Theorem~\ref{thm:nobadparts}, and so $\beta$ satisfies both of these conditions. 
We want a relative $\xi$-approximation, and if $\xi \le e^{-n/2}$ then $1/\xi \ge e^{n/2}$ and so we have time to compute via brute force. 
We can compute $Z_{G}(\beta)$ in time $O(q^n \cdot \Delta n) = O((1/\xi)^{2\log q}\cdot \Delta n)$, which a polynomial in $1/\eps$ and $n$.
Otherwise, for $\zeta=\xi/2$, we compute for each $\psi\in\Psi$ a relative $\zeta$-approximation $\hat Z^\psi_{G}$ to $\tilde Z^\psi_{G}(\beta)$ via the cluster expansion and the method described in previous subsections.
This takes time
\[
O\left( q^{\ell} \cdot n (n/\zeta)^{O(\log \Delta)}\right)\,.
\]
Let $\hat Z_{G}=\sum_{\psi\in\Psi} \hat Z^\psi_{G}$ be the sum of these approximations. Then we have for each $\psi\in\Psi$,
\[
e^{-\zeta} \hat Z_{G}^\psi \le \tilde Z^\psi_G(\beta) \le e^{\zeta} \hat Z_{G}^\psi\,,
\]
and so summing over $\psi\in\Psi$ gives
\[
e^{-\zeta} \hat Z_{G'} \le \sum_{\psi\in\Psi}\tilde Z^\psi_G(\beta) \le  e^{\zeta}\hat Z_{G'}\,.
\]
But we also have
\[
e^{-e^{-n}} \tilde Z^\psi_{G}(\beta) \le Z^\psi_{G}(\beta) \le e^{e^{-n}} \tilde Z^\psi_{G}(\beta)\,,
\]
and so
\[
e^{-(\zeta+e^{-n})} \hat Z_{G} \le \sum_{\psi\in\Psi} Z^\psi_{G}(\beta) = Z^*_{G}(\beta) \le  e^{\zeta+e^{-n}}\hat Z_{G}\,.
\]
Finally, we have
\[
e^{-e^{-n}}  Z^*_{G'}(\beta) \le Z_{G}(\beta) \le e^{e^{-n}} Z^*_{G}(\beta)\,,
\]
and so
\[
e^{-(\zeta+2e^{-n})}\hat Z_{G} \le Z_{G}(\beta) \le  e^{\zeta+2e^{-n}}\Zalg\,.
\]
Then since $\xi > e^{-n/2}$ we have that $\hat Z_{G}$ is a relative $(\zeta+2\xi^2)$-approximation to $Z_{G}(\beta)$.
Without loss of generality let $\xi \le 1/4$ so that $\zeta=\xi/2$ satisfies $\zeta\le \xi-2\xi^2$, and we have a relative $\xi$-approximation as required.
To complete the proof of Theorem~\ref{thm:nobadparts}, we reiterate that the running time of this algorithm, assuming we are supplied with the required partition $P_1,\dotsc,P_{\ell}$, is
\[
O\left( q^{\ell} n \cdot (2n/\xi)^{O(\log \Delta)}\right)\,.
\]
An upper bound on the running time in either case $\xi\le e^{-n/2}$ and $\xi > e^{-n/2}$ is then
\[T_\lrg =  O\big(q^\ell\Delta n(2n/\xi)^{O(\log(q\Delta))}\big) \,.\]
This concludes the proof of \cref{thm:nobadparts}.

\subsection{Proof of Theorem~\ref{thm:Zalg}}

With the results collected in the previous subsections we can make precise the sketch of the proof given at the start of Section~\ref{sec:cluster}.
Suppose that the graph $G$ satisfies the assumptions of Theorem~\ref{thm:Zalg}, which means $G$ is on $n$ vertices, has maximum degree $\Delta$ and minimum degree $\delta$, and is supplied with a $(\phiin,\phiout,\tau\delta)$-partition $P_1,\dotsc,P_\ell$ such that $\ell<k$ and for some $\eta>0$ and $0\le s\le \ell$ we have $|P_i|<\eta n$ for all $1\le i\le s$ and $|P_i|\ge \eta n$ for $s+1\le i\le\ell$. 
We are given $\xi>0$ in the statement of Theorem~\ref{thm:Zalg}.

By Theorem~\ref{thm:JKP} we can compute a relative $\xi$-approximation $Z_i$ to each $Z_{G[P_i]}(\beta)$ for $1\le i\le s$ in time 
\[ T_\sml = O\big(s \Delta \eta n(2\eta n/\xi)^{O(\log(q\Delta))}\big)\,,\]
provided $\beta \ge (4+2\log(q\Delta))/(\phiin\tau\delta)$. 
Let $V' = \bigcup_{i=s+1}^\ell P_i$ and $G'=G[V']$, and observe that $G'$ has maximum degree $\Delta$ and $P_{s+1},\dotsc,P_\ell$ is a $(\phiin, \phiout, \tau\delta)$-partition of $V'$ into $\ell-s$ parts such that each part has size at least $\eta'|V'|$ with $\eta'= \eta n/|V'|\ge \eta$. 
Note that the minimum degree of $G'$ can be less than the minimum degree of $G$, but since we did not remove any edges from inside a good part, $G'[P_i]$ inherits minimum degree $\tau\delta$.
Then by Theorem~\ref{thm:nobadparts} we can compute a relative $\xi$-approximation to $Z_{G'}(\beta)$ in time 
\[T_\lrg =  O\big(q^{\ell-s}\Delta n(2n/\xi)^{O(\log(q\Delta))}\big) \,,\]
provided $\beta \ge (4+2\log(q\Delta))/(\phiin\tau\delta\eta')$, which is implied by $\beta \ge (4+2\log(q\Delta))/(\phiin\tau\delta\eta)$.

Let $G''$ be the disjoint union of $G'$ and $G[P_i]$ for $1\le i\le s$. 
Then $G''$ is obtained from $G$ by removing all edge sets $\partial(P_i)$ for $1\le i\le s$, which means removing $X \le s \phiout\Delta\eta n$ edges in total. 
This means that for any coloring $\omega : V\to [q]$ we have 
\[ m_{G}(\omega) \ge m_{G''}(\omega) \ge m_{G}(\omega) - X\,,\]
and so
\[ e^{-\beta X} Z_G(\beta) \le Z_{G''}(\beta)\le Z_G(\beta)\,, \]
meaning $e^{\beta X/2}Z_{G''}(\beta)$ is a relative $\beta X/2$-approximation to $Z_G(\beta)$. 
But we also have $Z_{G''}(\beta) = Z_{G'}(\beta)\cdot \prod_{i=1}^sZ_{G[P_i]}(\beta)$ and so via the separate relative $\xi$-approximations to each term of the product computed above we have that $e^{\beta X/2}\hat Z' \cdot \prod_{i=1}^s \hat Z_i$ is a relative $(s+1)\xi$-approximation to $e^{\beta X/2}Z_{G''}(\beta)$ and hence a relative approximation to $Z_G(\beta)$ with accuracy
\[ (s+1)\xi + \beta X/2 \le (s+1)\xi + \beta s\phiout\Delta\eta n/2\,, \]
as required accuracy for Theorem~\ref{thm:Zalg}.
To complete the proof it suffices to observe that 
\begin{align*}
T_\sml + T_\lrg &= O\big(s \Delta \eta n(2\eta n/\xi)^{O(\log(q\Delta))}\big) + O\big(q^{\ell-s}\Delta n(2n/\xi)^{O(\log(q\Delta))}\big) 
\\&= O\left(\left(s\eta^{O(\log(q\Delta))} + q^{\ell-s}\right)\Delta n\left(\frac{2n}{\xi}\right)^{O(\log(q\Delta))}\right) \,.
\end{align*}

\section{Spectral partitioning with a minimum degree condition}\label{sec:partitioning}

In this section we extend the method of~\cite{OT14} to give a minimum degree condition in the partitions found algorithmically, and the necessary modifications are fairly straightforward. 
This approach relies heavily on a result of Lee, Oveis Gharan, and Trevisan~\cite{LOT12} that generalizes the well-known Cheeger inequality to higher eigenvalues. 
We write 
\[
\rho_G(k) := \min_{\substack{A_1,\dotsc,A_k\\\text{disjoint}}} \max\{ \phi_G(A_i) : i\in[k]\}\,,
\]
for the \emph{$k$-way expansion} of a graph $G$, and note that the main result of~\cite{LOT12} is the following higher-order Cheeger inequality.
\begin{theorem}[Lee, Oveis Gharan and Trevisan~\cite{LOT12}]\label{thm:LOT12}
	For any graph $G$ and $k\ge 2$,
	\[ \frac{\lam_k}{2} \le \rho_G(k)\le O(k^2)\sqrt{\lam_k}\,. \]
\end{theorem}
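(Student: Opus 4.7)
The plan is to prove the two inequalities separately. For the easy direction $\lam_k/2 \le \rho_G(k)$, I would invoke the Courant--Fischer min-max characterisation on the normalised Laplacian $\mathcal{L}$. Given disjoint sets $A_1,\dotsc,A_k$ achieving $\rho_G(k)$, the vectors $g_i := D^{1/2}\indicator{v\in A_i}/\sqrt{\vol(A_i)}$ are orthonormal and span a $k$-dimensional subspace of $\RR^{V}$. A short edge-by-edge computation, using $\langle g_i,\mathcal{L} g_i\rangle = \phi_G(A_i)$ and bounding the cross terms coming from edges between distinct $A_i$, shows that every unit vector in $\mathrm{span}(g_1,\dotsc,g_k)$ has Rayleigh quotient against $\mathcal{L}$ at most $2\max_i\phi_G(A_i) = 2\rho_G(k)$, and min-max then delivers $\lam_k \le 2\rho_G(k)$.

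For the substantive upper bound $\rho_G(k) \le O(k^2)\sqrt{\lam_k}$, I would follow a spectral embedding and rounding strategy. Let $f_1,\dotsc,f_k$ be eigenfunctions of $\mathcal{L}$ with eigenvalues $\lam_1\le\dotsb\le\lam_k$, and form the map $F : V \to \RR^k$ given by $v\mapsto (f_1(v),\dotsc,f_k(v))$. The aim is to extract $k$ functions $h_1,\dotsc,h_k : V \to \RR$ with pairwise disjoint supports and Rayleigh quotient against $\mathcal{L}$ bounded by some polynomial in $k$ times $\lam_k$; applying the classical single-function Cheeger inequality (via a level-set sweep) to each $h_j$ then produces $k$ disjoint sets $S_1,\dotsc,S_k$ of conductance at most $O(k^2)\sqrt{\lam_k}$ after tracking the constants. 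To build the $h_j$ I would treat the points $F(v)$ as a Euclidean point set (after a random rotation so that no coordinate direction is distinguished), partition $\RR^k$ into $k$ well-separated regions using a padded random ball-growing argument, and localise suitable combinations of $f_1,\dotsc,f_k$ to each region through a smooth Lipschitz cutoff that decays to zero near the region boundary.

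The hard part will be carrying out the localisation without excessive inflation of the Dirichlet energy. Naively restricting a low-energy function to a subset can blow up its Rayleigh quotient by an uncontrolled factor, because the function can take large values at the newly created boundary; one must therefore both (a) choose the regions randomly so that few edges are cut by any given boundary in expectation, and (b) use smooth cutoffs rather than hard restriction so that each boundary edge contributes energy only proportional to the squared Lipschitz constant of the cutoff. Propagating these losses through the random partition and the Cheeger sweep, and arguing that one can always find $k$ regions of non-negligible mass inside the spectral image, is the technical heart of the argument and is where the $O(k^2)$ factor ultimately appears. This is essentially the content of~\cite{LOT12}, whose presentation I would follow.
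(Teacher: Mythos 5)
The paper does not prove this statement; it is quoted verbatim from Lee, Oveis Gharan, and Trevisan~\cite{LOT12} and used as a black box. Your sketch faithfully reproduces the structure of the proof in that reference: the lower bound $\lam_k/2\le\rho_G(k)$ via Courant--Fischer on the span of $D^{1/2}$-weighted indicators of the disjoint $A_i$, with the factor $2$ coming from $(c_i-c_j)^2\le 2(c_i^2+c_j^2)$ on cross-edges; and the upper bound via the spectral embedding $v\mapsto(f_1(v),\dotsc,f_k(v))$, a smooth random partition (in~\cite{LOT12} carried out on the normalised embedding on the sphere) producing disjointly supported localised test functions with controlled Dirichlet energy, followed by a Cheeger sweep on each. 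This is correct and matches the source the paper cites, so there is nothing to reconcile against the paper itself.
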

This result implies a partition of $V$ into $P_1,\dotsc,P_\ell$ such that $\phi(P_\ell)$ is bounded above in terms of $\ell$ and $\lam_\ell$, but we are interested in extra properties of the induced subgraphs $G[P_i]$ related to expansion and minimum degree. 
The main result of~\cite{OT14} shows that one can obtain a partition as above with a lower bound on $\phi(G[P_i])$ controlled by $\lam_k$, and we develop a strengthening that adds a minimum degree condition.

Throughout this section we assume that $G=(V,E)$ is a graph on $n$ vertices.
We write $w(u,v) = 1$ if $uv$ is an edge of $G$, and $w(u,v)=0$ otherwise. 
This notation carries over from~\cite{OT14} where they work with the extra generality of edge-weighted graphs. 
Here we restrict our attention to usual graphs, but keep the notation of~\cite{OT14} for easy comparison.
For convenience, we assume that $G$ has no isolated vertices, which can easily be verified in $O(n)$ time. 
This prevents $\phi(S)$ being undefined when $S$ is a non-empty set of degree-zero vertices.
The interesting case for these results is a graph with no isolated vertices, and many authors tacitly assume this fact.
Before we give the algorithm, we collect some necessary results from~\cite{OT14}. 

\begin{lemma}[{Oveis Gharan and Trevisan~\cite[Lemma 1.13]{OT14}, see also~\cite{KLL+13,LOT12}}]\label{lem:OT14lam2Pi}
	There is a universal constant $C>0$ such that for any integer $k\ge 2$ and any partitioning of $V$ into $\ell< k$ pieces $P_1,\dotsc,P_\ell$ we have 
	\[
	\min_{i\in[\ell]}\lam_2(G[P_i]) \le 2Ck^6\lam_k(G)\,.
	\]
\end{lemma}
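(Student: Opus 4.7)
The plan is to invoke the variational (Rayleigh quotient) characterization of $\lam_2$: it suffices to produce, for some index $i^* \in [\ell]$, a nonzero function $f : P_{i^*} \to \RR$ that is orthogonal to $\mathbf{1}_{P_{i^*}}$ in the $G[P_{i^*}]$-degree-weighted inner product and satisfies $R_{G[P_{i^*}]}(f) \le 2Ck^6 \lam_k(G)$, where $R_{G[P_{i^*}]}$ denotes the normalized-Laplacian Rayleigh quotient on $G[P_{i^*}]$. The test function will be engineered from the first $k$ eigenvectors of the normalized Laplacian of $G$, whose collective Dirichlet energy is at most $k\lam_k$.

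Concretely, I would fix orthonormal eigenvectors $f_1,\dots,f_k$ of $G$ with eigenvalues $\lam_1,\dots,\lam_k$, and for each $i$ look at the $k$-dimensional span $W_i := \mathrm{span}\{f_1|_{P_i},\dots,f_k|_{P_i}\} \subseteq \RR^{P_i}$. Since $G[P_i]$-degree-orthogonality against $\mathbf{1}_{P_i}$ is a single linear constraint, within $W_i$ there is always a subspace of dimension at least $k-1$ of admissible test functions. The numerator of the Rayleigh quotient is controlled directly: because $E(G[P_i]) \subseteq E(G)$, the $G[P_i]$-Dirichlet energy of any combination $\sum_j a_j f_j|_{P_i}$ is at most the corresponding full-graph Dirichlet energy $\sum_j a_j^2 \lam_j \le \lam_k \sum_j a_j^2$. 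Averaging/pigeonholing over the $\ell < k$ parts then picks out the index where the resulting quotient is smallest.

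The main obstacle, and the source of the $k^6$ factor, is controlling the denominator — the $\deg_{G[P_i]}$-weighted $\ell^2$-mass of $f$ on $P_i$. A priori the restriction of a spectral combination to a single part could lose most of its mass, either because the spectral embedding concentrates on other parts or because boundary vertices of $P_i$ have $\deg_{G[P_i]}$ substantially smaller than $\deg_G$. To handle this I would invoke a higher-order isoperimetric/localization result in the style of Theorem~\ref{thm:LOT12} and~\cite{KLL+13}: replace $f_1,\dots,f_k$ by $k$ near-disjointly supported low-energy vectors, so that each localized vector is essentially concentrated in some $P_i$ and carries a quantifiable fraction of the original eigenspace mass. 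Tracking the losses from overlap, truncation, and degree reweighting at each step produces the polynomial-in-$k$ overhead.

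This mass-control step is the delicate part; I expect the cleanest route to be to adapt the localization machinery of~\cite{LOT12,OT14}, keeping explicit track of how a constant-fraction of the Dirichlet-energy budget and a constant-fraction of the $\ell^2$-mass survive simultaneously on some part $P_{i^*}$. Combining such a localized test function with the numerator bound above then delivers $\lam_2(G[P_{i^*}]) \le 2Ck^6\lam_k(G)$ after the constants are tightened.
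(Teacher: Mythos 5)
The paper does not prove this lemma; it is imported verbatim from Oveis Gharan and Trevisan, as the header makes explicit (it is their Lemma~1.13, see also~\cite{KLL+13,LOT12}). There is therefore no in-paper proof to compare your sketch against; the authors treat it as a black box from the literature.

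Evaluating your sketch on its own terms: you locate the crux correctly. Restricting a low-energy combination $f=\sum_j a_j f_j$ to a part $P_i$ only helps the Dirichlet numerator (fewer edges), while the $\deg_{G[P_i]}$-weighted $\ell^2$-mass in the denominator can collapse, both because the spectral embedding may carry almost all its mass on other parts and because $\deg_{G[P_i]}$ can be much smaller than $\deg_G$ on vertices near $\partial(P_i)$. But the step you propose to control this—replacing $f_1,\dots,f_k$ by $k$ near-disjointly supported low-energy vectors and then ``keeping explicit track'' of overlap, truncation, and degree-reweighting losses—is precisely the nontrivial content of the lemma; invoking ``localization machinery'' without a concrete surviving-mass estimate names the difficulty rather than resolving it. In particular you never produce the quantitative statement that some part $P_{i^*}$ simultaneously receives an $\Omega(\mathrm{poly}(1/k))$-fraction of the mass of a vector whose restriction remains nonzero after projecting off $\mathbf{1}_{P_{i^*}}$ in the $\deg_{G[P_{i^*}]}$-inner product; absent that, the pigeonhole has nothing to act on. There is also a smaller imprecision: you allow the coefficient vector $a$ to depend on $i$ when defining $W_i$, so it is unclear what single quantity is being averaged across the $\ell$ parts. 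The natural use of $\ell<k$ is to choose one $f$ in the span of $f_1,\dots,f_k$ that is $\deg_G$-orthogonal to all $\ell$ indicators simultaneously, but this gives orthogonality in the $\deg_G$-inner product, not the $\deg_{G[P_i]}$-inner product that $\lambda_2(G[P_i])$ requires, so even that step does not by itself address the reweighting you correctly flagged.
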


For $S, T\subset V$ we define 
\[ \e{S}{T} := \sum_{u\in S,v\in T-S} w(u,v)\,, \]
noting that $\e{S}{T}=\e{T}{S}$ when $T\cap S=\emptyset$, but in general the terms $
\e{S}{T}$ and $\e{T}{S}$ are not equal.
We also have for $S\subset V$ that 
\[ |\partial(S)| = \e{S}{V\setminus S} = \e{S}{V}\,. \]
For $S\subset B_i\subset V$ we define 
\[ \phi(S, B_i) := \frac{\e{S}{B_i}}{\frac{\vol(B_i\setminus S)}{\vol(B_i)}\cdot \e{S}{V\setminus B_i}}\,. \]

\begin{lemma}[{Oveis Gharan and Trevisan~\cite[Lemma 2.2]{OT14}}]\label{lem:OT14minphi}
	For any sets $S\subsetneq B\subset V$, if $\phi(S,B)\le \eps/3$ and 
	\[ \max\{\phi(S), \phi(B\setminus S)\} \ge (1+\eps)\phi(B)\,, \]
	then $\min\{\phi(S),\phi(B\setminus S)\}\le\phi(B)$. 
\end{lemma}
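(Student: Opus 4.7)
I would set up the proof with edge-count notation: let $a$ be the number of edges with one endpoint in $S$ and the other in $B\setminus S$, let $b$ and $c$ be the numbers of edges from $S$ and from $B\setminus S$ respectively to $V\setminus B$, and set $s = \vol(S)$, $t = \vol(B\setminus S)$, $\phi = \phi(B) = (b+c)/(s+t)$. In this notation $\phi(S) = (a+b)/s$, $\phi(B\setminus S) = (a+c)/t$, and the hypothesis $\phi(S,B)\le\eps/3$ becomes $3a(s+t)\le \eps tb$. The key algebraic identity, obtained by summing $a+b$ and $a+c$, is
\[ \phi(S)\,s + \phi(B\setminus S)\,t = 2a + (s+t)\phi\,. \]

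Assume for contradiction that both $\phi(S)>\phi$ and $\phi(B\setminus S)>\phi$, and split on which of them is the one guaranteed to be $\ge(1+\eps)\phi$. The easy case is $\phi(B\setminus S)\ge(1+\eps)\phi$: plugging $\phi(S)>\phi$ and $\phi(B\setminus S)\ge(1+\eps)\phi$ into the identity gives $\eps\phi t\le 2a$, and combining with $a\le \eps tb/(3(s+t))$ yields $3(b+c)/2\le b$, which is impossible.

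The harder case is $\phi(S)\ge(1+\eps)\phi$, where the asymmetry of $\phi(S,B)$ (which controls $a$ against $tb$, not $sc$) bites. I would rewrite all three hypotheses as bounds on $a(s+t)$: $\phi(S)\ge(1+\eps)\phi$ reads $a(s+t)\ge \eps sb + (1+\eps)sc - bt$; $\phi(B\setminus S)>\phi$ reads $a(s+t)>tb-sc$; and $\phi(S,B)\le\eps/3$ reads $a(s+t)\le \eps tb/3$. From the last two, $sc>tb(3-\eps)/3$, upper bounding $tb$ in terms of $sc$. Substituting into the combination of the first and third, a short calculation yields $\eps sb < sc\cdot\eps(\eps-1)/(3-\eps)$. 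For $\eps\le 1$ the right side is nonpositive while the left is nonnegative, a contradiction (degenerate cases $b=0$ or $s=0$ are handled directly from the hypothesis).

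The main obstacle is the second case. The naive estimate $\phi(S)\le \phi + 2a/s$ that one reads off the identity is not sharp enough, because $a$ is controlled only against $tb$ and not $sb$; when $s$ is small this leaves room for $\phi(S)$ to exceed $(1+\eps)\phi$. The resolution is to feed the weaker hypothesis $\phi(B\setminus S)>\phi$ back into the bound on $\phi(S,B)$ to extract a comparison between $sc$ and $tb$, and only then combine it with the asymmetric bound derived from $\phi(S)\ge(1+\eps)\phi$.
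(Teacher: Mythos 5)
The paper states this as a black-box citation to Oveis Gharan and Trevisan (their Lemma 2.2) and gives no proof, so there is nothing internal to compare your argument against; I have checked your argument on its own terms. Your edge-count parametrization is set up correctly: with $a=\e{S}{B}$, $b=\e{S}{V\setminus B}$, $c=\e{B\setminus S}{V\setminus B}$, $s=\vol(S)$, $t=\vol(B\setminus S)$, the hypothesis $\phi(S,B)\le\eps/3$ is exactly $a(s+t)\le \eps tb/3$, and the identity $\phi(S)s+\phi(B\setminus S)t = 2a+(s+t)\phi(B)$ is the right engine for Case 1. Case 1 closes as you describe: $\eps\phi(B)t<2a\le 2\eps tb/(3(s+t))$ forces $3(b+c)<2b$, impossible. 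In Case 2 your chain $sc>tb(3-\eps)/3$ (from combining $\phi(B\setminus S)>\phi(B)$ with the $\phi(S,B)$ bound) fed into $\eps sb+(1+\eps)sc\le bt(3+\eps)/3$ (from combining $\phi(S)\ge(1+\eps)\phi(B)$ with the $\phi(S,B)$ bound) does give $\eps sb < sc\cdot\eps(\eps-1)/(3-\eps)$, and I verified the algebra $\frac{3+\eps}{3-\eps}-(1+\eps)=\frac{\eps(\eps-1)}{3-\eps}$. Your diagnosis that the asymmetry of $\phi(S,B)$ (controlling $a$ only against $tb$) is what makes Case 2 the nontrivial one is exactly right, and routing through the weaker inequality $\phi(B\setminus S)>\phi(B)$ to trade $tb$ for $sc$ is the correct fix.

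Two small caveats worth making explicit. First, your Case 2 contradiction requires $0<\eps\le 1$; the lemma statement does not say this, but every application in the paper takes $\eps=1/k$ with $k\ge2$, and the companion Lemma 2.3 from the same source explicitly restricts to $0\le\eps\le1$, so this is the intended regime — still, it should be stated. Second, the case $\eps=0$ needs a one-line separate argument (with $a=0$ the mediant inequality $\min\{b/s,c/t\}\le(b+c)/(s+t)$ gives the conclusion directly), since your Case 1 step divides by $\eps$. The genuinely degenerate cases ($b=0$ with $a>0$ making $\phi(S,B)$ infinite, or $s=0$, or $t=0$) are all precluded by the hypotheses or the paper's standing assumption of no isolated vertices, as you note.
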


In our modification of the method of~\cite{OT14} we use a sharper version of this result for singleton sets $S$ given as the following claim.

\begin{claim}\label{claim:removevertex}
	Consider a vertex $u$ and subset $B$ such that $u\in B\subset V$, and write $d_V:=\vol_G(u)$, $d_B:=\vol_{G[B]}(u)$. 
	If $\vol(B)>d_V$ (i.e.\ if $B\setminus\{u\}$ contains a vertex with positive degree) then $\phi(B-u)\le \phi(B)$ if and only if $d_B \le (1-\phi(B))d_V/2$.
\end{claim}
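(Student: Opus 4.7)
The plan is to prove the iff by a direct calculation, so the main work is to correctly account for what happens to the boundary and volume of $B$ when we remove the single vertex $u$. First I would set up bookkeeping: write $d_V = \vol_G(u)$ and $d_B = \vol_{G[B]}(u)$, so that $u$ has $d_B$ neighbours in $B \setminus \{u\}$ and $d_V - d_B$ neighbours in $V \setminus B$. This immediately gives $\vol(B - u) = \vol(B) - d_V$, and the hypothesis $\vol(B) > d_V$ ensures $\vol(B - u) > 0$ so that $\phi(B - u)$ is well-defined.

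Next I would track the boundary. When $u$ is removed from $B$, the $d_V - d_B$ edges from $u$ to $V \setminus B$ leave $\partial(B)$ (both endpoints are now outside $B - u$), while the $d_B$ edges from $u$ into $B \setminus \{u\}$ enter $\partial(B - u)$. All other edges in $\partial(B)$ are unaffected. Thus
\[ |\partial(B - u)| = |\partial(B)| - (d_V - d_B) + d_B = |\partial(B)| - d_V + 2 d_B\,. \]

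Finally, I would substitute into the definition of conductance and manipulate. The inequality $\phi(B-u) \le \phi(B)$ becomes
\[ \frac{|\partial(B)| - d_V + 2 d_B}{\vol(B) - d_V} \le \frac{|\partial(B)|}{\vol(B)}\,. \]
Both denominators are positive, so cross-multiplying preserves the direction. After cancelling the $|\partial(B)|\vol(B)$ terms, the inequality reduces to $2\vol(B)\, d_B \le d_V(\vol(B) - |\partial(B)|)$, and dividing by $\vol(B) > 0$ and using $|\partial(B)|/\vol(B) = \phi(B)$ yields $d_B \le (1 - \phi(B)) d_V / 2$ exactly. Every step is reversible, giving the claimed equivalence.

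There is no real obstacle here; the only thing to be careful about is the edge count for $\partial(B-u)$, since it is easy to miscount by forgetting that the $d_V - d_B$ edges to $V \setminus B$ leave the boundary while the $d_B$ internal edges join it. Once that identity is written down, the rest is a one-line algebraic rearrangement, and the hypothesis $\vol(B) > d_V$ is used only to keep all denominators strictly positive.
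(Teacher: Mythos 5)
Your proof is correct and is essentially the same as the paper's: a direct calculation tracking how $\vol$ and $|\partial(\cdot)|$ change when $u$ is removed, followed by algebraic rearrangement (your identity $|\partial(B-u)| = |\partial(B)| - d_V + 2d_B$ is exactly the content of the paper's displayed formula for $\phi(B-u)$). The only cosmetic difference is that the paper first rewrites $\phi(B-u)$ as an affine function of $\phi(B)$ and then reads off the equivalence, whereas you cross-multiply directly; both use the hypothesis $\vol(B) > d_V$ only to keep the denominator positive.
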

\begin{proof}
	This follows easily from the definitions of $\phi$ and $\vol$ which give
	\[ \phi(B-u) = \frac{\vol(B)}{\vol(B)-d_V}\phi(B) - \frac{d_V-2d_B}{\vol(B)-d_V}\,, \]
	from which the result is immediate.
\end{proof}

\begin{lemma}[{Oveis Gharan and Trevisan~\cite[Lemma 2.3]{OT14}}]\label{lem:OT14phiGBiS}
	Let $S_B\subset B_i\subset V$ and write $\overline{S}_B=\overline{S}\cap B_i$. 
	If we have  $\vol(S_B)\le\vol(B_i)/2$ and 
	\[ \min\{\phi(S_B,B_i),\,\phi(\overline{S}_B,B_i)\}\ge\eps/3 \] 
	for some $0\le\eps\le1$, then 
	\[ \phi_{G[B_i]}(S_B) \ge \frac{\e{S_B}{B_i}}{\vol(S_B)} \ge \frac{\eps}{7}\max\{\phi(S_B),\,\phi(\overline{S}_B)\}\,. \]
\end{lemma}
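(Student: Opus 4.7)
The plan is to separate the two inequalities and to split the second into two one-sided bounds. The first inequality $\phi_{G[B_i]}(S_B) \ge \e{S_B}{B_i}/\vol(S_B)$ I would dispatch in one line: the edges of $G[B_i]$ crossing $S_B$ are precisely the $\e{S_B}{B_i}$ edges between $S_B$ and $\overline{S}_B$ inside $B_i$, while $\vol_{G[B_i]}(S_B) \le \vol_G(S_B) = \vol(S_B)$ since passing to the induced subgraph only discards edges incident to $S_B$.

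For the main bound, I would prove $\e{S_B}{B_i}/\vol(S_B) \ge (\eps/7)\phi(S_B)$ and $\e{S_B}{B_i}/\vol(S_B) \ge (\eps/7)\phi(\overline{S}_B)$ separately and then take the max. The strategy in each case is to use the $\phi(\cdot,B_i)$ hypothesis to control edges leaving $B_i$ by edges staying inside $B_i$, then feed this into the decomposition $|\partial(S_B)| = \e{S_B}{B_i} + \e{S_B}{V\setminus B_i}$ (and analogously for $\overline{S}_B$, after observing that $\e{\overline{S}_B}{B_i} = \e{S_B}{B_i}$, since both count the edges of $B_i$ joining $S_B$ to $\overline{S}_B$).

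On the $\phi(S_B)$ side, the hypothesis $\phi(S_B,B_i) \ge \eps/3$ combined with $\vol(\overline{S}_B) \ge \vol(B_i)/2$ (which follows from $\vol(S_B) \le \vol(B_i)/2$) gives $\e{S_B}{V\setminus B_i} \le (6/\eps)\e{S_B}{B_i}$, which collapses into the claim after using $\eps \le 1$ to absorb the constant $1+6/\eps$ into $7/\eps$. On the $\phi(\overline{S}_B)$ side, I would analogously bound $\e{\overline{S}_B}{V\setminus B_i}$ above by a multiple of $\e{S_B}{B_i}$ using $\phi(\overline{S}_B,B_i) \ge \eps/3$, then use both $\vol(S_B) \le \vol(B_i)/2$ and $\vol(\overline{S}_B) \ge \vol(B_i)/2$ to absorb the ratios between $\vol(S_B)$, $\vol(\overline{S}_B)$, and $\vol(B_i)$ into the final constant $1/7$.

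I do not expect a conceptual obstacle: the whole argument is direct manipulation of the definitions. The one place to be careful is the identity $\e{\overline{S}_B}{B_i} = \e{S_B}{B_i}$, since the set-subtraction hidden in the definition of $\e{\cdot}{\cdot}$ means one should verify that both sides count each edge between $S_B$ and $\overline{S}_B$ exactly once; the factor of $3$ in the hypothesis $\eps/3$ together with the volume halving leading to $6/\eps$ is what determines the exact constant $1/7$ in the conclusion, and one should track this carefully on the $\phi(\overline{S}_B)$ side where three volume terms appear rather than two.
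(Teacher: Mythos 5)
This lemma is imported from~\cite[Lemma~2.3]{OT14}; the paper states it without proof, so there is no in-paper argument to compare against. Your proposal is correct. Unpacking the definition $\phi(S,B_i) = \e{S}{B_i}\vol(B_i)/(\vol(B_i\setminus S)\e{S}{V\setminus B_i})$, the $\phi(S_B)$ side works exactly as you say: $\phi(S_B,B_i)\ge\eps/3$ and $\vol(\overline{S}_B)\ge\vol(B_i)/2$ give $\e{S_B}{V\setminus B_i}\le(6/\eps)\e{S_B}{B_i}$, so $|\partial(S_B)|=\e{S_B}{B_i}+\e{S_B}{V\setminus B_i}\le(1+6/\eps)\e{S_B}{B_i}\le(7/\eps)\e{S_B}{B_i}$ once $\eps\le1$. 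For the $\phi(\overline{S}_B)$ side one must be slightly careful: applying $\phi(\overline{S}_B,B_i)\ge\eps/3$ bounds $\e{\overline{S}_B}{V\setminus B_i}$ by $(3/\eps)\cdot(\vol(B_i)/\vol(S_B))\,\e{S_B}{B_i}$, and the $\vol(B_i)/\vol(S_B)$ factor is not bounded. The fix is exactly as you hint — divide through by $\vol(S_B)$ and then replace $\vol(B_i)$ by $2\vol(\overline{S}_B)$ to get
\[
\frac{\e{S_B}{B_i}}{\vol(S_B)} \;\ge\; \frac{\eps}{6}\cdot\frac{\e{\overline{S}_B}{V\setminus B_i}}{\vol(\overline{S}_B)}\,,
\qquad
\frac{\e{S_B}{B_i}}{\vol(S_B)} \;\ge\; \frac{\e{\overline{S}_B}{B_i}}{\vol(\overline{S}_B)}\,,
\]
and then a convex combination of these two bounds (using $\eps\le1$) yields $\e{S_B}{B_i}/\vol(S_B)\ge(\eps/7)\phi(\overline{S}_B)$. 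Your observation that $\e{\overline{S}_B}{B_i}=\e{S_B}{B_i}$ (both count the edges between $S_B$ and $\overline{S}_B=B_i\setminus S_B$) is correct and essential. The first inequality $\phi_{G[B_i]}(S_B)\ge\e{S_B}{B_i}/\vol(S_B)$ is immediate as you say, since $\partial_{G[B_i]}(S_B)$ is exactly the edge set counted by $\e{S_B}{B_i}$ and $\vol_{G[B_i]}(S_B)\le\vol(S_B)$.
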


Note that in the above result we are careful to take $\phi$ and $\vol$ without subscripts in the graph $G$, and where we are interested in $G[B_i]$ we use a subscript. 

\begin{lemma}[{Oveis Gharan and Trevisan~\cite[Lemma 2.5]{OT14}}]\label{lem:OT14phiPi}
	Suppose that $P_1,\dotsc,P_\ell$ form a partition of $V$, and for each $i\in[\ell]$ we have $B_i\subset P_i$ such that $\e{P_i\setminus B_i}{P_i} \ge \e{P_i\setminus B_i}{V}/\ell$. 
	Then for all $i\in[\ell]$ we have $\phi(P_i)\le \ell\phi(B_i)$.
\end{lemma}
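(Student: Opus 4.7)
The plan is to bound $|\partial(P_i)| = \e{P_i}{V}$ in terms of $|\partial(B_i)| = \e{B_i}{V}$ by carefully splitting the boundary of $P_i$ according to which side of the partition $B_i \cup (P_i \setminus B_i)$ contributes it.

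First, I would decompose $\e{P_i}{V}$. Every edge leaving $P_i$ has its $P_i$-endpoint in either $B_i$ or $P_i \setminus B_i$, and in either case its other endpoint lies outside $P_i$ and hence outside $B_i$. Thus edges from $B_i$ to $V\setminus P_i$ contribute at most $\e{B_i}{V}$, while the remaining edges contribute exactly $\e{P_i \setminus B_i}{V} - \e{P_i \setminus B_i}{P_i}$ (total edges leaving $P_i \setminus B_i$ minus the ones that stay inside $P_i$, which all go to $B_i$).

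Second, I would apply the hypothesis $\e{P_i \setminus B_i}{P_i} \ge \e{P_i \setminus B_i}{V}/\ell$, which yields $\e{P_i \setminus B_i}{V} - \e{P_i \setminus B_i}{P_i} \le (\ell-1)\e{P_i \setminus B_i}{P_i}$. The key observation is then that $\e{P_i \setminus B_i}{P_i}$ counts edges between $P_i \setminus B_i$ and $B_i$, each of which lies in $\partial(B_i)$, so $\e{P_i \setminus B_i}{P_i} \le \e{B_i}{V}$. Combining the two bounds gives $\e{P_i}{V} \le \e{B_i}{V} + (\ell-1)\e{B_i}{V} = \ell\,\e{B_i}{V}$, and dividing by $\vol(P_i)$ together with $\vol(P_i) \ge \vol(B_i)$ yields $\phi(P_i) \le \ell \e{B_i}{V}/\vol(B_i) = \ell \phi(B_i)$.

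The proof is essentially a bookkeeping exercise; the only obstacle is keeping the asymmetric notation $\e{\cdot}{\cdot}$ straight and making sure that the edges counted by $\e{P_i \setminus B_i}{P_i}$ are identified with a subset of $\partial(B_i)$ rather than being double-counted against the $\e{B_i}{V}$ bound. No estimate beyond the hypothesis and the trivial volume monotonicity $\vol(B_i) \le \vol(P_i)$ is needed.
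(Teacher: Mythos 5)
Your proof is correct and is essentially the same argument as the paper's: both decompose $\e{P_i}{V}$ over $B_i$ and $P_i\setminus B_i$, apply the hypothesis to bound the latter contribution by $(\ell-1)\e{P_i\setminus B_i}{B_i}$, then use $\e{P_i\setminus B_i}{B_i}\le\e{B_i}{V}$ and $\vol(P_i)\ge\vol(B_i)$. The only cosmetic difference is where the slack is introduced (you discard $-\e{B_i}{P_i\setminus B_i}$ early; the paper discards it via $\e{P_i\setminus B_i}{V\setminus P_i}\le\e{P_i\setminus B_i}{V}$ instead), and this does not change the final bound.
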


\begin{proof}
	Write $S=P_i\setminus B_i$ and  note that because $B_i\subset P_i$ the assumption gives 
	\begin{equation}\label{eq:bestyolk} 
	\e{S}{P_i} = \e{S}{B_i}\ge \e{S}{V}/\ell\,.
	\end{equation}
	Then 
	\begin{align*} 
	\e{P_i}{V} 
	&= \e{B_i}{V} - \e{B_i}{S}  + \e{S}{V\setminus P_i}
	\\&\le \e{B_i}{V} - \e{S}{B_i}  + \e{S}{V}
	\\&\le \e{B_i}{V} + (\ell-1)\e{S}{B_i}\,,
	\end{align*}
	where the first inequality is because $S\cap B_i=\emptyset$ and $\e{S}{V\setminus P_i}\le \e{S}{V}$, and the second inequality follows from~\eqref{eq:bestyolk}. 
	This gives
	\begin{align*}
	\phi(P_i) = \frac{\e{P_i}{V}}{\vol(P_i)} 
	&\le \frac{\e{B_i}{V} + (\ell-1)\e{S}{B_i}}{\vol(B_i)}
	\\&=\phi(B_i) + \frac{(\ell-1)\e{B_i}{S}}{\vol(B_i)} \le \ell\phi(B_i)\,,
	\end{align*}
	where we use $S\cap B_i=\emptyset$ again, and $\e{B_i}{S}\le\e{B_i}{V}$.
\end{proof}

In the following lemma we have $B_i\subset P_i\subset V$ and $S\subset P_i$ with the notation 
\begin{align*}
S_B&:=B_i \cap S\,,&  \overline{S}_B &:= B_i \cap \overline{S}\,,\\
S_P&:=S \setminus B_i\,,& \overline{S}_P&:=\overline{S}\setminus B_i\,.
\end{align*}

\begin{lemma}[{Oveis Gharan and Trevisan~\cite[Lemma 2.6]{OT14}}]\label{lem:OT14phiS}
	Suppose that $B_i\subset P_i \subset V$, and let $S\subset P_i$ be such that $\vol(S_B)\le \vol(B_i)/2$. 
	Let $\xi$ be given such that $\xi\le \phi(S_P)$ and $\xi\le\max\{\phi(S_B),\,\phi(\overline{S}_B)\}$. 
	Suppose also that the following conditions hold for some $\ell\ge 1$ and $0<\eps<1$,
	\begin{enumerate}
		\item\label{itm:SP} If $S_P\ne\emptyset$ then $\e{S_P}{P_i}\ge \e{S_P}{V}/\ell$,
		\item\label{itm:SB} If $S_B\ne\emptyset$ then $\min\{\phi(S_B,B_i),\,\phi(\overline{S}_B, B_i)\} \ge \eps/3$.
	\end{enumerate}
	Then 
	\[ \phi(S)\ge \phi_{G[P_i]}(S) \ge \frac{\eps\xi}{14\ell}\,. \]
\end{lemma}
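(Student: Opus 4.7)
The plan is to first observe that the inequality $\phi(S) \ge \phi_{G[P_i]}(S)$ is elementary: since $S \subset P_i$, passing from $G$ to $G[P_i]$ removes from both $|\partial(S)|$ and $\vol(S)$ exactly the edges joining $S$ to $V \setminus P_i$, and a short comparison of the two ratios (both with the same term $2|E(G[S])|$ in the denominator) gives the claim.

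For the substantive bound $\phi_{G[P_i]}(S) \ge \eps\xi/(14\ell)$, I would proceed by a case analysis on which of $S_B$ and $S_P$ is empty, corresponding to which of conditions~\ref{itm:SP} and~\ref{itm:SB} is active. When $S_P = \emptyset$, we have $S = S_B \subset B_i$, and the $\xi$-hypothesis together with condition~\ref{itm:SB} place us in the regime of Lemma~\ref{lem:OT14phiGBiS}, yielding $\e{S_B}{B_i}/\vol(S_B) \ge (\eps/7)\xi$. Since the edges counted by $\e{S_B}{B_i}$ are boundary edges of $S$ in $G[P_i]$ and $\vol_{G[P_i]}(S) \le \vol_G(S_B)$, this transfers to $\phi_{G[P_i]}(S) \ge (\eps/7)\xi$, comfortably exceeding the target. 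When $S_B = \emptyset$, we have $S = S_P$; condition~\ref{itm:SP} combined with $\phi(S_P) \ge \xi$ gives $\e{S_P}{P_i} \ge \phi(S_P)\vol(S_P)/\ell \ge \xi\vol(S_P)/\ell$, and dividing by $\vol_{G[P_i]}(S) \le \vol_G(S_P)$ yields $\phi_{G[P_i]}(S) \ge \xi/\ell$.

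The interesting case is when both $S_B$ and $S_P$ are nonempty. Here I would decompose the boundary as $|\partial_{G[P_i]}(S)| = \e{S_B}{\overline{S}_B} + \e{S_B}{\overline{S}_P} + \e{S_P}{\overline{S}_B} + \e{S_P}{\overline{S}_P}$, using Lemma~\ref{lem:OT14phiGBiS} for the contribution from $S_B$ and condition~\ref{itm:SP} for the contribution from $S_P$, and upper bound $\vol_{G[P_i]}(S) \le \vol_G(S_B) + \vol_G(S_P)$. The mediant inequality $(a+c)/(b+d) \ge \min\{a/b, c/d\}$ then combines the two separate conductance-type bounds into a joint lower bound on $\phi_{G[P_i]}(S)$, and the resulting $\min\{(\eps/7)\xi, \xi/(2\ell)\}$ exceeds $\eps\xi/(14\ell)$ once the sub-case bookkeeping is done.

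The main obstacle will be the treatment of the internal edges $\e{S_B}{S_P}$: these are counted in $\e{S_P}{P_i}$ (and so in the bound from condition~\ref{itm:SP}) but are \emph{not} boundary edges of $S$ in $G[P_i]$, so if they account for most of $\e{S_P}{P_i}$ then the $S_P$-side contribution to $\partial_{G[P_i]}(S)$ could a priori collapse. The resolution is to exploit more of condition~\ref{itm:SB} than is used by Lemma~\ref{lem:OT14phiGBiS}: unpacking the definition of $\phi(S_B, B_i)$ and using $\vol(\overline{S}_B) \ge \vol(B_i)/2$, the hypothesis gives $\e{S_B}{B_i} \ge (\eps/6)\,\e{S_B}{V\setminus B_i} \ge (\eps/6)\,\e{S_B}{S_P}$. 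Thus a large count of internal $S_B$-$S_P$ edges is forced to be accompanied by a proportionally large boundary contribution $\e{S_B}{B_i}$, and a short sub-case split on the size of $\e{S_B}{S_P}$ relative to $\e{S_P}{\overline{S}_B} + \e{S_P}{\overline{S}_P}$ closes the argument with the stated constant.
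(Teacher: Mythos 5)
Your proposal reaches the right conclusion with the right ingredients but organizes the case analysis differently from the paper. You split on which of $S_B$, $S_P$ is empty and then, when both are nonempty, decompose $\partial_{G[P_i]}(S)$ into its four edge classes, bound the $S_B$-side via Lemma~\ref{lem:OT14phiGBiS}, the $S_P$-side via condition~\ref{itm:SP}, and glue them with a mediant inequality; the obstacle you flag --- the internal edges $\e{S_B}{S_P}$ are counted by $\e{S_P}{P_i}$ but are not boundary edges of $S$ in $G[P_i]$ --- and your resolution via the unpacked bound $\e{S_B}{B_i}\ge(\eps/6)\e{S_B}{S_P}$ are exactly what the paper uses. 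The paper's own split is instead on whether $\vol(S_B)\ge\vol(S_P)$, so that $\vol(S)\le 2\max\{\vol(S_B),\vol(S_P)\}$ supplies a clean denominator in each branch; when $\vol(S_P)$ dominates, the identity $\e{S_P}{P_i}=\e{S_P}{P_i\setminus S}+\e{S_B}{S_P}$ together with $\eps/6\le 1$ folds the $S_B$ contribution back into $(\eps/12)\e{S_P}{P_i}$, after which condition~\ref{itm:SP} finishes directly. Thus the paper never invokes Lemma~\ref{lem:OT14phiGBiS} and condition~\ref{itm:SP} simultaneously and needs no mediant. Your route also works, and indeed absorbs the empty sub-cases into the general one, but the final constant is sensitive to the threshold in your proposed sub-split on $\e{S_B}{S_P}$: splitting at equality against $\e{S_P}{\overline{S}_B}+\e{S_P}{\overline{S}_P}$ only yields $\eps\xi/(24\ell)$, whereas a threshold of roughly $6$-to-$1$ recovers $\eps\xi/(14\ell)$, so this needs to be pinned down if you write the argument out in full.
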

\begin{proof}
	Simply because $S\subset P_i$ we have
	\begin{align*}
		\phi(S) &= \frac{\e{S}{V}}{\vol(S)} = \frac{\e{S}{P_i} + \e{S}{V\setminus P_i}}{\vol_{G[P_i]}(S) +  \e{S}{V\setminus P_i}} \\&\ge \frac{\e{S}{P_i}}{\vol_{G[P_i]}(S)} = \phi_{G[P_i]}(S)\,,
	\end{align*}
	where we use that $(a+x)/(b+x) \ge a/b$ when $b\ge a \ge 0$, $b>0$ and $x\ge 0$. 
	Then it suffices to lower bound $\phi_{G[P_i]}(S)$. 
	
	First suppose that $\vol(S_B)\ge \vol(S_P)$, and hence $\vol(S)\le 2\vol(S_B)$. 
	We then have 
	\[ \phi_{G[P_i]}(S)\ge \frac{\e{S}{P_i}}{\vol(S)} \ge \frac{\e{S_B}{B_i}}{2\vol(S_B)}\,. \]
	Because of assumption~\ref{itm:SB} and $\vol(S_B)\le\vol(B_i)/2$ we may apply Lemma~\ref{lem:OT14phiGBiS} to obtain 
	\[ \frac{\e{S_B}{B_i}}{\vol(S_B)} \ge \frac{\eps}{7}\max\{\phi(S_B),\,\phi(\overline{S}_B)\} \ge \frac{\eps\xi}{7}\,, \]
	and hence $\phi_{G[P_i]}(S)\ge \eps\xi/14$, which is stronger than required. 
	
	Suppose instead that $\vol(S_P)>\vol(S_B)$. 
	Now we use the assumptions $\phi(S_B,B_i)\ge\eps/3$ and $\vol(S_B)\le\vol(B_i)/2$ to obtain 
	\begin{equation}\label{eq:eSBBlb}
	\e{S_B}{B_i} \ge \frac{\vol(\overline{S}_B)}{\vol(B_i)}\e{S_B}{V\setminus B_i}\frac{\eps}{3} \ge \e{S_B}{S_P}\frac{\eps}{6}\,. 
	\end{equation}
	Then we have 
	\begin{align*}
	\phi_{G[P_i]}(S) \ge \frac{\e{S}{P_i}}{\vol(S)} &\ge \frac{\e{S_P}{P_i\setminus S} + \e{S_B}{B_i}}{2\vol(S_P)}
	\\&\ge \frac{\e{S_P}{P_i\setminus S} + \e{S_B}{S_P}\eps/6}{2\vol(S_P)}\,.
	\end{align*}
	Using the observations that
	\begin{align*}
		\e{S_P}{P_i} &= \e{S_P}{P_i\setminus S} + \e{S_P}{S_B} 
		\\&= \e{S_P}{P_i\setminus S} + \e{S_B}{S_P}\,,
	\end{align*}
	and $\eps/6\le 1$ we continue from above, 
	\[ \phi_{G[P_i]}(S) \ge \frac{\eps}{12}\frac{\e{S_P}{P_i}}{\vol(S_P)} \ge \frac{\eps}{12\ell}\frac{\e{S_P}{V}}{\vol(S_P)} = \frac{\eps}{12\ell}\phi(S_P) \ge \frac{\eps\xi}{12\ell}\,, \]
	using the assumptions~\ref{itm:SP} and $\phi(S_P)\ge\xi$ in turn. 
\end{proof}

We also rely on the well-known \SP{} algorithm which efficiently finds a set of close to maximal conductance, see e.g.~\cite{OT14}.

\begin{theorem}\label{thm:SP}
	There is a near-linear time algorithm \SP{} that, given a graph $G=(V,E)$ finds a set $S\subset V$ such that $\vol(S)\le\vol(V)/2$ and $\phi(S)\le 4\sqrt{\phi(G)}$.
\end{theorem}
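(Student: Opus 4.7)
The plan is to use the classical spectral partitioning scheme based on Cheeger's inequality, with the eigenvector computation replaced by a near-linear time approximation to keep the overall running time near-linear. First I would compute a vector $x \in \RR^V$ that is a sufficiently accurate approximation to the second eigenvector of the normalised Laplacian of $G$; rather than diagonalising the Laplacian exactly, one uses a near-linear time Laplacian solver (Spielman--Teng or any of its successors) together with a small number of rounds of inverse power iteration to produce $x$ whose Rayleigh quotient is at most, say, $(1+o(1))\lam_2$, and which is orthogonal to the trivial eigenvector. This step is the only place where the runtime is non-trivial; the rest of the algorithm is linear in $|V|+|E|$.

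Next I would carry out a standard sweep-cut analysis. Re-index the vertices so that the values $x_{v_1}/\sqrt{\deg(v_1)} \le x_{v_2}/\sqrt{\deg(v_2)} \le \cdots$ are sorted, and consider the $n-1$ prefix sets $S_i := \{v_1,\dotsc,v_i\}$. Among those $S_i$ with $\vol(S_i)\le\vol(V)/2$, output the one of smallest conductance. The classical sweep-cut lemma shows that the best such prefix satisfies $\phi(S)\le\sqrt{2\,R(x)}$, where $R(x)$ is the (normalised) Rayleigh quotient of $x$; combining with $R(x)\le (1+o(1))\lam_2$ yields $\phi(S)\le (1+o(1))\sqrt{2\lam_2}$. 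The volume constraint $\vol(S)\le \vol(V)/2$ is automatic because any sweep cut with larger volume may be replaced by its complement, whose conductance differs only by a bounded factor in $\vol$.

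Finally, the easy half of Cheeger's inequality, $\lam_2 \le 2\phi(G)$, lets me convert the spectral bound into a combinatorial one: $\phi(S) \le (1+o(1))\sqrt{2\lam_2} \le (1+o(1))\cdot 2\sqrt{\phi(G)} \le 4\sqrt{\phi(G)}$, where the slack of a factor of $2$ absorbs the $(1+o(1))$ loss coming from the approximate eigenvector computation.

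The main obstacle, and the only place where care is needed, is producing the approximate Fiedler vector in near-linear time with an additive error small enough that the constant in the final bound remains $4$; everything else is a textbook sweep-cut computation. Since this result is standard (and cited from~\cite{OT14}), the proof above is really an organisational summary of the known argument rather than new work.
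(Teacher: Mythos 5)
The paper does not prove Theorem~\ref{thm:SP}: it is stated as a black-box fact, with the preceding sentence (``We also rely on the well-known \SP{} algorithm \ldots see e.g.~\cite{OT14}'') making clear that it is being cited, not re-derived. Your sketch is the standard Cheeger sweep-cut argument that the cited literature uses, and it is essentially correct: an approximate Fiedler vector with Rayleigh quotient within a constant factor of $\lam_2$, a sweep cut giving a set of conductance at most about $\sqrt{2\lam_2}$, and then the easy direction $\lam_2\le 2\phi(G)$ of Cheeger, with the constant $4$ leaving ample room (a factor of $4$ on $\lam_2$, hence $2$ on the conductance) to absorb the loss from the near-linear-time eigenvector approximation. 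One small caution: your remark that a sweep cut with $\vol(S)>\vol(V)/2$ ``may be replaced by its complement, whose conductance differs only by a bounded factor in $\vol$'' is not quite right in this paper's asymmetric convention $\phi(S)=|\partial(S)|/\vol(S)$, since the complement can have much larger conductance. What actually holds is that the sweep-cut lemma, applied to a vector orthogonal to the degree-weighted all-ones vector, already produces a prefix set $S_i$ (or its complement, which is a suffix set and hence also a sweep cut) satisfying both $\vol\le\vol(V)/2$ and the conductance bound simultaneously, so the volume constraint comes out of the lemma rather than by post-hoc complementation.
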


We now give a simple modification of the algorithm from~\cite{OT14} and prove Theorem~\ref{thm:partitioning}. 
We add some simple steps that ensure the required minimum degree conditions.
Let
\[ \rho^* := \min\left\{\lam_k/10, 30Ck^5\sqrt{\lam_{k-1}}\right\}\,, \]
where $C$ is the constant from Lemma~\ref{lem:OT14lam2Pi}, and write
\[ \phiin := \frac{\lam_k}{140k^2}\,,\qquad\phiout := 90Ck^6\sqrt{\lam_{k-1}}\,,\qquad \tau = \frac{1}{5(k-1)}\,.\]

\begin{algorithm}[htp]
	\begin{algorithmic}[1]
		\Input $k>1$ such that $\lambda_k>0$.
		\Output Sets $P_1,\dotsc,P_\ell$ that form a $(\phiin^2/4,\phiout, \tau)$ partitioning of $G$ for some $1\le \ell<k$.
		\State Let  $\ell=1$, $P_1=B_1=V$.
		\While {$\exists\ i\in\ell$ such that $\e{P_i\setminus B_i}{B_i} < \e{P_i\setminus B_i}{P_j}$ for $j\neq i$, or Spectral Partitioning finds $S\subseteq P_i$ such that
			$\max\{\phi_{G[P_i]}(S),\, \phi_{G[P_i]}(P_i\setminus S)\} < \phiin$\label{step:partitionwhile}} 
		\State Assume (after renaming) that $\vol(S\cap B_i) \leq \vol(B_i)/2$.  
		\State Let $S_B = S\cap B_i, {\overline{S}}_B = B_i \cap \overline{S}, S_P=(P_i\setminus B_i)\cap S$ and 
		${\overline{S}}_P = (P_i\setminus B_i)\cap \overline{S}$. \If { $\max\{\phi(S_B), \phi(\overline{S}_B)\} \leq \rho^* (1+1/k)^{\ell+1}$}\label{step:ifmaxSB}
		\State Let $B_i=S_B, P_{\ell+1}=B_{\ell+1}=\overline{S}_B$ and $P_i=P_i\setminus \overline{S}_B$. Set $\ell\leftarrow \ell+1$
		and \Goto{} step~\ref{step:vertexfix}.\label{step:addBlB}
		\EndIf
		\If { $\max\{\phi(S_B, B_i), \phi(\overline{S}_B,B_i)\} \leq 1/(3k)$,\label{step:ifchangeTi}}
		\State Update $B_i$ to either of $S_B$ or $\overline{S}_B$ with the smallest conductance, and \Goto{}
		step~\ref{step:vertexfix}.\label{step:changeTi}
		\EndIf
		\If {$\phi(S_P) \le \rho^* (1+1/k)^{\ell+1}$}\label{step:ifmaxSP}
		\State Let $P_{\ell+1}=B_{\ell+1}=S_P$, and update $P_i=P_i\setminus S_P$. Set $\ell\leftarrow \ell+1$ and \Goto{} step~\ref{step:vertexfix}.\label{step:addBlP}
		\EndIf
		\If { $\e{P_i\setminus B_i}{P_i} < \e{P_i\setminus B_i}{B_j}$ for $j\neq i$\label{step:ifPimTi} }
		\State Update  $P_j=P_j\cup (P_i\setminus B_i)$, and $P_i=B_i$ and \Goto{} step~\ref{step:vertexfix}.
		\EndIf
\If { $\e{S_P}{P_i} < \e{S_P}{P_j}$ for $j\neq i$,\label{step:ifPSPi}}
		\State Update $P_i = P_i-S_P$ and merge $S_P$ with $\argmax_{P_j}\e{S_P}{P_j}$. 
		\EndIf
		
		\While {$\exists\ i\in[\ell]$ and $v\in B_i$ such that $\vol_{G[B_i]}(v) < \vol_G(v)/5$}\label{step:vertexfix}\label{step:ifmindegB}
		\State Update $B_i  = B_i \setminus\{v\}$.\label{step:mindegB}
		\EndWhile
		\While {$\exists\ i\in[\ell]$ and $v\in P_i\setminus B_i$ such that $\e{v}{P_i} < \e{v}{P_j}$ for $j\neq i$}\label{step:ifmindegP}
		\State Update $P_i = P_i \setminus \{v\}$ and insert $v$ into $\argmax_{P_j}\e{v}{P_j}$\,.\label{step:mindegP}
		\EndWhile\label{step:endfix}
		
		\EndWhile
		\Return $P_1,\dotsc,P_\ell$.
	\end{algorithmic}
	\caption{A polynomial time algorithm for partitioning $G$ into $k$ expanders.\label{alg}}
\end{algorithm}

We note that lines 1 to 19 of Algorithm~\ref{alg} are identical to~\cite[Algorithm 3]{OT14} except for the trivial modification that the \Goto{} statements jump to our new steps, the while loops at lines~\ref{step:vertexfix}--\ref{step:endfix}, instead of to line~\ref{step:partitionwhile}. 
That is, we have simply added some extra work to the end of each iteration of the main while loop in~\cite[Algorithm 3]{OT14}. 
This extra work moves vertices that have small degree inside the relevant sets to other sets. 
At the termination of the algorithm we have $\mu_{G[B_i]}(v) \ge \mu_G(v)/5$ for all $v\in B_i$, a strong bound independent of $\ell$, but for $v\in P_i\setminus B_i$ we will have $\mu_{G[P_i]}(v) \ge \mu_G(v)/\ell$.
For convenience, we work with the stated $\tau=1/(5(k-1))$ instead of $\tau = \min\{1/5,\,1/\ell\}$ that our proof actually gives.

During the entire run of the algorithm $B_1,\dotsc,B_\ell$ are disjoint, $B_i\subset P_i$ for all $i\in[\ell]$, and $P_1,\dotsc, P_\ell$ form a partitioning of $V$. 
We will prove that the algorithm terminates with $\ell<k$, but the following claim contains $
\ell<k$ as a hypothesis for convenience.
Once we have proved that $\ell<k$ throughout, this assumption is seen to be automatically satisfied. 

\begin{claim}\label{claim:phiBi}
	Throughout the execution of the algorithm, provided $\ell<k$ we have
	\[ \max_{i\in[\ell]}\{ \phi(B_i) \} \le \rho^*(1+1/k)^\ell\,. \]
\end{claim}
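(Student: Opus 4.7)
The plan is to prove the claim by induction on the sequence of updates that the algorithm makes. At the start we have $\ell=1$ and $B_1=V$, for which $\phi(B_1)=0\le\rho^*(1+1/k)$, so the invariant holds initially. For the inductive step I will consider each branch that modifies some $B_i$ (or adds a new $B_{\ell+1}$) and verify that after the modification the bound still holds; branches that only touch the sets $P_i\setminus B_i$ cannot violate the invariant since none of the $B_j$ change.

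For the OT14-style branches this amounts to reproducing the argument from~\cite{OT14}. In the branch at line~\ref{step:addBlB} both $S_B$ and $\overline{S}_B$ satisfy the test directly, so the new $B_i$ and $B_{\ell+1}$ obey $\phi\le\rho^*(1+1/k)^{\ell+1}$. In the branch at line~\ref{step:addBlP} the new $B_{\ell+1}=S_P$ satisfies $\phi(S_P)\le\rho^*(1+1/k)^{\ell+1}$ by the test, and the old $B_i$'s are unchanged. The interesting case is line~\ref{step:changeTi}: since this branch is only reached after the test at line~\ref{step:ifmaxSB} fails, we have $\max\{\phi(S_B),\phi(\overline S_B)\}>\rho^*(1+1/k)^{\ell+1}\ge(1+1/k)\phi(B_i)$ by the inductive hypothesis, while the branch's own test gives $\phi(S_B,B_i)\le 1/(3k)$. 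Applying Lemma~\ref{lem:OT14minphi} with $\eps=1/k$ yields $\min\{\phi(S_B),\phi(\overline S_B)\}\le\phi(B_i)\le\rho^*(1+1/k)^{\ell}$, and the algorithm installs that minimum as the new $B_i$.

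The new work is in the vertex-fix while loops at lines~\ref{step:vertexfix}--\ref{step:endfix}. The loop at lines~\ref{step:ifmindegP}--\ref{step:mindegP} only moves vertices in $P_i\setminus B_i$, so no $B_j$ changes and the invariant is preserved automatically. For the removal at lines~\ref{step:ifmindegB}--\ref{step:mindegB}, I will apply Claim~\ref{claim:removevertex} to the single vertex $v$ being deleted. Writing $d_V=\vol_G(v)$ and $d_B=\vol_{G[B_i]}(v)$, the trigger condition gives $d_B<d_V/5$, so the claim guarantees $\phi(B_i\setminus\{v\})\le\phi(B_i)$ provided $\phi(B_i)\le 3/5$, which in turn holds by the inductive hypothesis once I check $\rho^*(1+1/k)^{\ell}<3/5$. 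This last bound is a uniform estimate: $\rho^*\le\lam_k/10\le 1/5$ since $\lam_k\le 2$, and $(1+1/k)^{\ell}\le(1+1/k)^{k-1}<e$, so $\rho^*(1+1/k)^{\ell}<e/5<3/5$. Iterating the removal then keeps $\phi(B_i)$ monotonically non-increasing, so the invariant survives each deletion, and the loop must terminate because $|B_i|$ strictly decreases.

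The main delicate point will be the combination in the vertex-removal loop: I need both that the hypothesis of Claim~\ref{claim:removevertex} (i.e.\ $\vol(B_i)>d_V$, so $B_i$ does not degenerate to $\{v\}$) is available and that $\phi(B_i)$ is small enough that $d_B<d_V/5$ implies $d_B\le(1-\phi(B_i))d_V/2$. The inductive bound $\phi(B_i)\le\rho^*(1+1/k)^{\ell}<3/5$ handles both: if $B_i$ were $\{v\}$ then $\phi(B_i)=1$, which contradicts the invariant (here we use that the graph has no isolated vertex, so $d_V>0$), and the $3/5$ bound gives the numerical slack required by the claim. Once this is in place, the rest of the argument is bookkeeping across branches, so the overall proof is a clean induction with Claim~\ref{claim:removevertex} and Lemma~\ref{lem:OT14minphi} as the two key tools.
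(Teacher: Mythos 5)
Your proof is correct and matches the paper's own argument essentially line for line: the same induction over the branches of the main loop, the same use of Lemma~\ref{lem:OT14minphi} for line~\ref{step:changeTi}, and the same appeal to Claim~\ref{claim:removevertex} with the bound $\rho^*(1+1/k)^{\ell} < e/5 < 3/5$ (the paper writes $2e/10$, which is the same number) for the vertex-removal loop, including the observation that $\phi(B_i) < 1$ rules out $B_i$ degenerating to a singleton.
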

\begin{proof}
	At the start we have $\ell=1$ and $B_i=V$ so $\phi(B_i)=0$. 
	We prove the claim inductively from here, noting that the only steps in which some $B_i$ is modified are lines \ref{step:addBlB}, \ref{step:changeTi}, \ref{step:addBlP}, and \ref{step:mindegB}. 
	
	Both steps \ref{step:addBlB} and \ref{step:addBlP} are designed specifically to maintain the induction hypothesis;
	before incrementing $\ell$ on those lines we obtain $\ell+1$ sets such that 
	\[ \max_{i\in[\ell]}\{ \phi(B_i) \} \le \rho^*(1+1/k)^{\ell+1}\,, \]
	as required.
	
	If step~\ref{step:changeTi} is executed then the condition on line~\ref{step:ifmaxSB} is not satisfied, giving 
	\[ \max\{\phi(S_B), \phi(\overline{S}_B)\} > (1+1/k)^{\ell+1} \rho^* \ge (1+1/k)\phi(B_i) \]
	by the induction hypothesis. 
	But if step~\ref{step:changeTi} is executed we must also satisfy the condition on line~\ref{step:ifchangeTi}, giving 
	\[ \max\{\phi(S_B, B_i), \phi(\overline{S}_B,B_i)\} \leq \frac{1}{3k}\,. \]
	Then by Lemma~\ref{lem:OT14minphi} with $\eps=1/k$ and $S=S_B$ and the induction hypothesis we have 
	\[ \min\{\phi(S_B),\phi(\overline{S}_B)\} \le \phi(B_i) \le \rho^*(1+1/k)^\ell\,,\]
	so making whichever of $S_B$ and $\overline{S}_B$ has smaller conductance the new $B_{\ell+1}$ satisfies the required bound.
	
	The above arguments are exactly as in~\cite{OT14}, and for our modification we have to analyze step~\ref{step:mindegB}.
	By the induction hypothesis $\phi(B_i)\le\rho^*(1+1/k)^\ell$ it suffices to show that whenever $v\in B_i$ has $\vol_{G[B_i]}(v) < \vol_G(v)/5$, we have $\phi(B_i\setminus\{v\})\le\phi(B_i)$. 
	To see this, note that the induction hypothesis, the assumption that $\ell<k$, and the definition of $\rho^*$ (in which we use $\lam_k\le 2$) together mean that $\phi(B_i)\le 2e/10$, and hence that $|B_i|\ge 2$. 
	This is because any singleton set (containing a vertex of positive degree) has conductance $1$.
	So we may apply Claim~\ref{claim:removevertex}, and hence it suffices to show that 
	\[ \vol_{G[B_i]}(v) \le (1-\phi(B_i))\vol_G(v)/2\,. \]
	But by the induction hypothesis we have $(1-\phi(B_i))/2 \ge (1-2e/10)/2 \ge 1/5$, so the condition in step~\ref{step:ifmindegB} gives the required bound.
	We note for use below that we only needed the assumption $\ell<k$ to analyze step~\ref{step:mindegB}.
\end{proof}

\begin{claim}\label{claim:ellk}
	During the execution of the algorithm we always have $\ell <k$. 
\end{claim}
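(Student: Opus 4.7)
The plan is to argue by contradiction, combining Claim~\ref{claim:phiBi} with the higher-order Cheeger inequality of Theorem~\ref{thm:LOT12}. The counter $\ell$ only ever increases, and it only increases at steps~\ref{step:addBlB} and~\ref{step:addBlP}, so it suffices to show that $\ell$ can never be incremented to $k$.

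Suppose for contradiction that there is a first moment in the execution at which $\ell$ is set to $k$. Just before this increment we had $\ell=k-1$, so by Claim~\ref{claim:phiBi} (whose hypothesis $\ell<k$ is satisfied) the previously existing sets $B_1,\dotsc,B_{k-1}$ each had $\phi(B_i)\le \rho^*(1+1/k)^{k-1}$. The steps~\ref{step:addBlB} and~\ref{step:addBlP} were specifically designed, as noted in the proof of Claim~\ref{claim:phiBi}, to produce one further set $B_k$ with $\phi(B_k)\le \rho^*(1+1/k)^{\ell+1}=\rho^*(1+1/k)^{k}$, without worsening the conductance bound on the other $B_i$. Thus immediately after the increment we would have $k$ pairwise disjoint sets $B_1,\dotsc,B_k$ with
\[ \max_{i\in[k]}\phi(B_i)\le \rho^*\bigl(1+1/k\bigr)^{k}\le e\rho^*\,. \]

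By Theorem~\ref{thm:LOT12} we have $\rho_G(k)\ge \lam_k/2$, and by the very definition of $\rho_G(k)$ as a minimum over disjoint $k$-tuples, any collection of $k$ disjoint nonempty sets $A_1,\dotsc,A_k\subset V$ satisfies $\max_i\phi(A_i)\ge \rho_G(k)$. Applying this to $B_1,\dotsc,B_k$ we obtain
\[ \frac{\lam_k}{2}\le \rho_G(k)\le \max_{i\in[k]}\phi(B_i)\le e\rho^*\le \frac{e\lam_k}{10}\,, \]
where the last inequality uses $\rho^*\le \lam_k/10$ from its definition. Since $e/10<1/2$ this is a contradiction, so $\ell$ cannot reach $k$, completing the proof.

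The only subtlety is checking that the $B_i$ are indeed disjoint and nonempty at the contradictory moment, which is an invariant maintained throughout the algorithm (the $B_i$ are disjoint subsets of the disjoint $P_i$, and the increments at steps~\ref{step:addBlB} and~\ref{step:addBlP} create $B_{\ell+1}$ out of a nonempty piece produced by \SP{}).
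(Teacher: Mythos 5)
Your proof is correct and follows essentially the same route as the paper: both arguments analyze the moment $\ell$ would be incremented to $k$, use the inductive bound from Claim~\ref{claim:phiBi} together with the if-conditions at steps~\ref{step:ifmaxSB}/\ref{step:ifmaxSP} to obtain $k$ disjoint sets with $\max_i\phi(B_i)\le e\rho^*\le e\lam_k/10$, and derive a contradiction with the lower bound $\rho_G(k)\ge\lam_k/2$ from Theorem~\ref{thm:LOT12}. The paper phrases the bound step slightly differently (invoking the proof of Claim~\ref{claim:phiBi}, noting $\ell<k$ was only needed there for step~\ref{step:mindegB}), but the content is identical.
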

\begin{proof}
	We start with $\ell=1$ and $\ell$ is only ever incremented by $1$ at a time, so it suffices to show that a step which increments $\ell$ will never be executed when $\ell = k-1$. 
	The relevant steps are in lines~\ref{step:addBlB} and~\ref{step:addBlP}, and supposing that one of these steps causes $\ell$ to be incremented to $k$ we analyze the sets $B_1,\dotsc,B_k$ that exist immediately after this step, and hence \emph{before} the jump \Goto{} step~\ref{step:vertexfix} completes.
	By the proof of the above claim, in which $\ell<k$ was only needed for the step~\ref{step:mindegB}, any such $B_1,\dotsc,B_k$ must have $\phi(B_i) \le \rho^*(1+1/k)^{\ell} < e\rho^*$ for all $i\in[\ell]$.
	But then we have disjoint sets $B_1,\dotsc,B_k$ such that
	\[ \max_{i\in[\ell]}\{\phi(B_i)\} < e\rho^* < \lam_k /2\, \]
	by the definition of $\rho^*$. 
	But this implies $\rho(k)<\lam_k/2$, contradicting Theorem~\ref{thm:LOT12}.
\end{proof}

Before proving that our minimum degree condition holds, we note that for $v\in U\subset V$ we have by definition that $\e{v}{U}=\mu_{G[U]}(v)$.

\begin{claim}
	Whenever the algorithm checks the condition in line~\ref{step:partitionwhile} to determine if execution continues, we have for all $i\in[\ell]$ and $v\in P_i$ that $\vol_{G[P_i]}(v) \ge \tau\vol_G(v)$.
\end{claim}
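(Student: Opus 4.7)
The plan is to invoke the two inner while loops at lines~\ref{step:vertexfix}--\ref{step:endfix}, which are designed precisely to enforce our two minimum-degree conditions. I will show that (i) after the first loop halts, $\vol_{G[B_i]}(v)\ge\vol_G(v)/5$ for every $v\in B_i$, (ii) after the second loop halts, $\e{v}{P_i}\ge\e{v}{P_j}$ for every $v\in P_i\setminus B_i$ and every $j\ne i$, and (iii) both guarantees are simultaneously in force whenever we return to line~\ref{step:partitionwhile}.

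For termination, the first loop strictly decreases $|B_i|$ each iteration, so it halts in at most $n$ steps. For the second loop I would use the potential $\Phi := \sum_{i=1}^{\ell} |E(G[P_i])|$: moving $v$ from $P_i$ to $P_j$ under the triggering condition $\e{v}{P_i}<\e{v}{P_j}$ increases $\Phi$ by the positive integer $\e{v}{P_j}-\e{v}{P_i}$, and $\Phi \le |E|$, so the loop halts. The key compatibility point is that the second loop never touches any $B_i$: it only relocates vertices that lie in some $P_i\setminus B_i$, and after a move they lie in some $P_j\setminus B_j$, so membership in the $B$-sets is never altered. Consequently, the property established by the first loop is preserved throughout the second and still holds when we check line~\ref{step:partitionwhile}.

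With both guarantees in hand, the claim splits into two cases. If $v\in B_i\subset P_i$, then
\[ \vol_{G[P_i]}(v) \ge \vol_{G[B_i]}(v) \ge \vol_G(v)/5 \ge \tau\vol_G(v), \]
using $\tau = 1/(5(k-1)) \le 1/5$ for $k\ge 2$. If instead $v\in P_i\setminus B_i$, then the second loop yields $\vol_{G[P_i]}(v)=\e{v}{P_i}\ge \e{v}{P_j}$ for every $j\ne i$, so summing over the parts of the partition gives
\[ \vol_G(v) = \sum_{j=1}^{\ell} \e{v}{P_j} \le \ell\cdot\vol_{G[P_i]}(v). \]
By Claim~\ref{claim:ellk} we have $\ell<k$, and hence $\vol_{G[P_i]}(v) \ge \vol_G(v)/(k-1) \ge \tau\vol_G(v)$. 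The only mildly delicate step is verifying that the second loop leaves all $B_i$ intact so that the two guarantees compose cleanly; everything else is a direct unpacking of the loop-exit conditions.
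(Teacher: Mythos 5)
Your proof is correct and follows essentially the same route as the paper's: split on whether $v\in B_i$ (use the first inner loop's stronger $1/5$ bound and $B_i\subset P_i$) or $v\in P_i\setminus B_i$ (use the second loop's exit condition and sum $\vol_G(v)=\sum_j \e{v}{P_j}$ to get $\e{v}{P_i}\ge\vol_G(v)/\ell$, then $\ell<k$). You also verify termination and that the second loop leaves the $B_i$ intact; the paper handles termination in the separate running-time claim and treats the non-interference implicitly, so this is a welcome clarification rather than a divergence.
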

\begin{proof}
	This is a simple consequence of the loops at lines~\ref{step:vertexfix}--\ref{step:endfix}. 
	They ensure that for all $i\in[\ell]$ and $v\in B_i$ we have the stronger condition $\vol_{G[B_i]}(v) \ge \vol_G(v)/5$. 
	We also have that any $v\in V\setminus \bigcup_{i\in[\ell]}B_i$ is in $P_i$ such that for all $j\in[\ell]$.
	\[ \e{v}{P_i} \ge \e{v}{P_j}\,. \]
	But since the sets $P_1,\dotsc,P_\ell$ form a partition of $V$ we have 
	\[ \vol_G(v) = \sum_{j\in[\ell]} \e{v}{P_j}\,, \]
	and hence $\vol_{G[P_i]}(v) = \e{v}{P_i} \ge \vol_{G}(v)/\ell$. 
	With the facts that $\ell<k$ and $\tau=1/(5(k-1))$ we have the required degree conditions.
\end{proof}

\begin{claim}
	If the algorithm terminates then the sets $P_1,\dotsc,P_\ell$ form a $(\phiin^2/4,\phiout,\tau \delta)$-partition of $V$. 
\end{claim}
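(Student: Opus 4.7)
The plan is to verify each of the three requirements defining a $(\phiin^2/4,\phiout,\tau\delta)$-partition in turn: the minimum-degree bound, the outside conductance bound $\phi_G(P_i) \le \phiout$, and the inside conductance bound $\phi(G[P_i]) \ge \phiin^2/4$. The three are essentially independent and each is short, relying on a different piece of machinery already established earlier in the section.

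The minimum-degree bound is the easiest: it follows immediately from the previous claim (which asserts that $\vol_{G[P_i]}(v) \ge \tau \vol_G(v)$ for all $v\in P_i$ whenever the while condition is checked, and the algorithm only returns when this check fails). Since $\vol_G(v)\ge\delta$, we get $\vol_{G[P_i]}(v)\ge\tau\delta$, so $G[P_i]$ has minimum degree at least $\tau\delta$.

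For the outside conductance, I would unpack the failed first clause of the while condition in step~\ref{step:partitionwhile}: for every $i$ and every $j\ne i$, $\e{P_i\setminus B_i}{B_i}\ge \e{P_i\setminus B_i}{P_j}$. Since $B_i\subset P_i$ gives $\e{P_i\setminus B_i}{P_i}=\e{P_i\setminus B_i}{B_i}$, summing this over the $\ell-1$ indices $j\ne i$ and adding the $j=i$ term yields $\e{P_i\setminus B_i}{P_i} \ge \e{P_i\setminus B_i}{V}/\ell$, which is exactly the hypothesis of Lemma~\ref{lem:OT14phiPi}. That lemma then gives $\phi(P_i)\le \ell\phi(B_i)$. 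Now Claim~\ref{claim:phiBi} bounds $\phi(B_i)\le \rho^*(1+1/k)^\ell \le e\rho^*$ since $\ell<k$ by Claim~\ref{claim:ellk}, and $\rho^*\le 30Ck^5\sqrt{\lam_{k-1}}$ by definition; together with $\ell\le k-1$ this yields $\phi(P_i)\le 30eCk^6\sqrt{\lam_{k-1}}\le 90Ck^6\sqrt{\lam_{k-1}}=\phiout$.

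For the inside conductance, the failed second clause of the while condition says that \SP{} applied to $G[P_i]$ does \emph{not} return a set $S\subseteq P_i$ with $\max\{\phi_{G[P_i]}(S),\phi_{G[P_i]}(P_i\setminus S)\}<\phiin$. The set $S$ that \SP{} returns satisfies $\vol_{G[P_i]}(S)\le \vol_{G[P_i]}(P_i)/2$, and since $|\partial(S)|=|\partial(P_i\setminus S)|$ inside $G[P_i]$, the smaller side has the larger conductance, so the maximum is exactly $\phi_{G[P_i]}(S)$. Hence $\phi_{G[P_i]}(S)\ge\phiin$ at termination. Combining this with the Cheeger-type guarantee $\phi_{G[P_i]}(S)\le 2\sqrt{\phi(G[P_i])}$ (the classical form of Cheeger, a strengthening of the conservative constant $4$ stated in Theorem~\ref{thm:SP}) gives $\phi(G[P_i])\ge \phiin^2/4$, as desired.

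The main obstacle is really just constant-tracking in the second step: the factor $e$ coming from the telescoping $(1+1/k)^\ell$ must be absorbed into the $90$ in $\phiout$, and one must be careful to use $\ell<k$ (via Claim~\ref{claim:ellk}) rather than a weaker bound. The third step is essentially an observation about how \SP{} outputs the smaller side of the cut, and the first step is a direct quotation of the previous claim.
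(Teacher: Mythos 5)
Your proof is correct and follows essentially the same route as the paper: the degree bound comes directly from the preceding claim, the outside bound $\phi(P_i)\le\phiout$ comes from the failed first clause of the while condition together with Lemma~\ref{lem:OT14phiPi}, Claim~\ref{claim:phiBi}, and $\ell<k$, and the inside bound comes from the failed \SP{} clause plus the Cheeger guarantee. You have also correctly observed that the literal constant $\phiin^2/4$ in the claim requires the Cheeger constant $2$ rather than the conservative $4$ quoted in Theorem~\ref{thm:SP} (which would only give $\phiin^2/16$), a small inconsistency in the paper that has no asymptotic consequence.
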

\begin{proof}
	Suppose that the algorithm terminates with some $\ell<k$ and sets $B_1\dotsc,B_\ell$ and $P_1,\dotsc,P_\ell$. 
	By the above claim we have the required degree conditions, and hence it suffices to show that $\phi(G[P_i])\ge \phiin^2/2$ and $\phi(P_i) \le \phiout$ for all $i\in[\ell]$. 
	
	By the condition in line~\ref{step:partitionwhile} and the performance of the \SP{} algorithm we have $\phi(G[P_i])\ge \phiin^2/4$ as required.
	
	Moreover, by the same condition in line~\ref{step:partitionwhile} we have for each $i\in[\ell]$ that 
	\[ \e{P_i\setminus B_i}{B_i} \ge \e{P_i\setminus B_i}{V}/\ell\,,\]
	and by Lemma~\ref{lem:OT14phiPi}, Claim~\ref{claim:phiBi} and the fact that $\ell<k$ we have
	\[ \phi(P_i) \le \ell\phi(B_i) \le \ell e \rho^* \le 90Ck^6\sqrt{\lam_{k-1}}\,,\]
	as required. 
\end{proof}

Then it remains to show that the algorithm terminates and bound its running time. 

\begin{claim}
	In each iteration of the main loop starting at line~\ref{step:partitionwhile}, at least one of the conditions in lines~\ref{step:ifmaxSB}, \ref{step:ifchangeTi}, \ref{step:ifmaxSP}, and \ref{step:ifPimTi} holds.
\end{claim}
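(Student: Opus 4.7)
The plan is a proof by contradiction: assume that in some iteration of the while-loop at line~\ref{step:partitionwhile}, none of the four stated conditions is satisfied, and derive a contradiction with the fact that the while-condition itself is satisfied. The while-condition is a disjunction of case~(i), the existence of $i,j$ with $j\ne i$ and $\e{P_i\setminus B_i}{B_i}<\e{P_i\setminus B_i}{P_j}$, and case~(ii), Spectral Partitioning returning $S\subseteq P_i$ with $\max\{\phi_{G[P_i]}(S),\phi_{G[P_i]}(P_i\setminus S)\}<\phiin$. I would handle case~(i) directly by observing that, via the identity $\e{P_i\setminus B_i}{P_i}=\e{P_i\setminus B_i}{B_i}$ (which holds because $B_i\subset P_i$), the inequality triggering~(i) coincides with the condition on line~\ref{step:ifPimTi}, which must therefore fire, contradicting the assumption.

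For case~(ii) the main idea is to apply Lemma~\ref{lem:OT14phiS} to $S$ with parameters $\xi=\rho^*(1+1/k)^{\ell+1}$ and $\eps=1/k$, so that the conclusion
\[ \phi_{G[P_i]}(S) \;\ge\; \frac{\eps\,\xi}{14\ell}, \]
combined with Claim~\ref{claim:ellk} ($\ell<k$) and the definitions of $\rho^*$ and $\phiin$, delivers $\phi_{G[P_i]}(S)\ge\phiin$, contradicting case~(ii). The volume condition $\vol(S_B)\le\vol(B_i)/2$ of the lemma is arranged by the renaming on line~3; the lower bounds $\xi\le\phi(S_P)$ and $\xi\le\max\{\phi(S_B),\phi(\overline{S}_B)\}$ follow from the negations of lines~\ref{step:ifmaxSP} and~\ref{step:ifmaxSB} respectively; and hypothesis~\ref{itm:SB}, used by the proof of Lemma~\ref{lem:OT14phiS} in the form $\min\{\phi(S_B,B_i),\phi(\overline{S}_B,B_i)\}\ge\eps/3$ through Lemma~\ref{lem:OT14phiGBiS}, follows from the negation of line~\ref{step:ifchangeTi}.

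The hard part I expect is verifying hypothesis~\ref{itm:SP} of Lemma~\ref{lem:OT14phiS}, namely $\e{S_P}{P_i}\ge\e{S_P}{V}/\ell$ when $S_P\ne\emptyset$: the failure of line~\ref{step:ifPimTi} supplies this partition-expansion property only for the whole set $P_i\setminus B_i$, and such mass-weighted statements do not in general pass to subsets. To bridge this gap I plan to exploit the vertex-level guarantee established by the inner while-loop at lines~\ref{step:ifmindegP}--\ref{step:mindegP} of the preceding iteration, which ensures that every $v\in P_i\setminus B_i$ satisfies $\e{v}{P_i}\ge\e{v}{P_j}$ for all $j\ne i$; summing this over $v\in S_P$ and carefully bookkeeping the edges internal to $G[S_P]$ should recover the required subset inequality, perhaps after mild adjustment of constants. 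The degenerate case $S_P=\emptyset$ makes hypothesis~\ref{itm:SP} vacuous, so the remaining hypotheses suffice to invoke the lemma, and the contradiction goes through in either case.
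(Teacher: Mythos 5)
Your high-level strategy — derive a contradiction via Lemma~\ref{lem:OT14phiS} — matches the paper's, but the central computation has a genuine gap. Applying the lemma with $\xi=\rho^*(1+1/k)^{\ell+1}$ and $\eps=1/k$ gives only $\phi_{G[P_i]}(S)\ge\rho^*(1+1/k)^{\ell+1}/(14k\ell)$, roughly $\rho^*/(14k^2)$, and since $\rho^*=\min\{\lam_k/10,\,30Ck^5\sqrt{\lam_{k-1}}\}$ this can be strictly below $\phiin=\lam_k/(140k^2)$ whenever $\sqrt{\lam_{k-1}}\ll\lam_k$, so no contradiction results. The paper avoids this with a sharper $\xi$: the failures of lines~\ref{step:ifmaxSB} and~\ref{step:ifmaxSP}, together with Claim~\ref{claim:phiBi}, make $B_1,\dotsc,B_{i-1},S_B,\overline{S}_B,B_{i+1},\dotsc,B_\ell$ (respectively $B_1,\dotsc,B_\ell,S_P$) a collection of $\ell+1$ disjoint sets whose maximum conductance is $\max\{\phi(S_B),\phi(\overline{S}_B)\}$ (resp.\ $\phi(S_P)$), so one may take $\xi\ge\max\{\rho^*,\rho(\ell+1)\}$. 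This is then combined with a case split: if $\ell=k-1$, Theorem~\ref{thm:LOT12} gives $\rho(k)\ge\lam_k/2$ and the bound follows immediately; if $\ell<k-1$, the upper bound $\phi(S)\le\sqrt{4Ck^6\lam_{k-1}}$ via Lemma~\ref{lem:OT14lam2Pi} and Cheeger forces $\rho^*=\lam_k/10$. Your proposal skips both the use of $\rho(\ell+1)$ and the case analysis on $\ell$, and neither can be omitted.

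There are also two smaller problems. Your case~(i) rests on the assertion that the while-loop's first disjunct coincides with the condition on line~\ref{step:ifPimTi}, but they are not equivalent: $\e{P_i\setminus B_i}{P_i}=\e{P_i\setminus B_i}{B_i}$, whereas $\e{P_i\setminus B_i}{B_j}\le\e{P_i\setminus B_i}{P_j}$ since $B_j\subset P_j$, so the disjunct can hold while line~\ref{step:ifPimTi} fails. And the per-vertex sum you propose for hypothesis~\ref{itm:SP} does not close: summing $\e{v}{P_i}\ge\e{v}{P_j}$ over $v\in S_P$ gives $\e{S_P}{P_i}$ plus twice the number of edges inside $G[S_P]$ on the left and $\e{S_P}{P_j}$ on the right, and the internal-edge term sits on the wrong side of the target $\e{S_P}{P_i}\ge\e{S_P}{V}/\ell$. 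The paper instead reads assumption~\ref{itm:SP} directly off the negation of the condition on line~\ref{step:ifPSPi}, which asserts $\e{S_P}{P_i}\ge\e{S_P}{P_j}$ for every $j\ne i$.
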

\begin{proof}
	We use Lemma~\ref{lem:OT14phiS} to show that if none of the conditions in lines~\ref{step:ifmaxSB}, \ref{step:ifchangeTi}, \ref{step:ifmaxSP}, and \ref{step:ifPimTi} holds then $\phi_{G[P_i]}(S)\ge \phiin$, which is a contradiction. 
	Then we suppose that none of the conditions hold. 
	
	Since the conditions in lines~\ref{step:ifchangeTi} and~\ref{step:ifPSPi} do not hold, assumptions 1\ and 2\ of Lemma~\ref{lem:OT14phiS} are satisfied with $\eps=1/k$. 
	And since condition in lines~\ref{step:ifmaxSB} and~\ref{step:ifmaxSP} do not hold we have the following facts which use the definition of $\rho(\ell+1)$ as the $(\ell+1)$-way expansion of $G$, and Claim~\ref{claim:phiBi}:
	\begin{align*} 
	\max\{\phi(&S_B,\overline{S}_B)\}
	\\&= \max\{\phi(B_1),\dotsc,\phi(B_{i-1}),\phi(S_B),\phi(\overline{S}_B),\phi(B_{i+1}),\dotsc,\phi(B_\ell)\} \\&\geq \max\{\rho^*,\rho(\ell+1)\}\,,
	\end{align*}
	and
	\begin{align*}
	\phi(S_P) &= \max\{\phi(B_1),\dotsc,\phi(B_\ell), \phi(S_P)\} \geq \max\{\rho^*,\rho(\ell+1)\}\,. 
	\end{align*}
	So with $\rho =\rho^*$  and $\eps=1/k$, by Lemma~\ref{lem:OT14phiS} we have
	\begin{equation}\label{eq:rholowerbound}
	\phi(S) \geq \frac{\eps \rho}{14 k} = \frac{\max\{\rho^*,\rho(\ell+1)\}}{14k^2}\,,
	\end{equation}
	where we use Theorem~\ref{thm:LOT12} for the final equality.
	Now, if $\ell=k-1$, then by Theorem~\ref{thm:LOT12} we have
	\[ \phi(S) \geq \frac{\rho(k)}{14k^2} \geq \frac{\lambda_k}{28k^2} \geq \phiin\,,\]
	which is a contradiction, so we are done.
	Otherwise, we must have $\ell<k-1$ and so by Lemma~\ref{lem:OT14minphi}, 
	\begin{equation}
	\label{eq:phisupperbound}
	\phi(S) \leq \min_{i\in[\ell]} \sqrt{2\lambda_2(G[P_i])} \leq \sqrt{4C k^6\lambda_{k-1}}, 
	\end{equation}
	where the first inequality follows from Cheeger's inequality $\phi(G)\le\sqrt{2\lam_2}$.
	Putting \eqref{eq:rholowerbound} and \eqref{eq:phisupperbound} together we have
	\[ \rho^* \leq 14k^2\sqrt{4C k^6 \lambda_{k-1}}\,.\]
	But then by definition of $\rho^*$ we must have $\rho^*=\lambda_k/10$, so by \eqref{eq:rholowerbound} we have the desired contradiction
	\[ \phi(S) \geq \frac{\lambda_k}{140k^2} =\phiin\,.\qedhere\]
\end{proof}

We can easily bound the running time of this algorithm in terms of the running time of~\cite[Algorithm 3]{OT14} by bounding the amount of extra work in lines~\ref{step:vertexfix}--\ref{step:endfix} that we add to each iteration. 

\begin{claim}
	The algorithm terminates in at most $O(kn|E|)$ iterations of the main loop, and each iteration takes time at most $O(n|E|)$, yielding a running time of at most $O(kn^2|E|^2)$.
\end{claim}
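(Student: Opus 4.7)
The plan is to bound separately (a) the per-iteration work of the main while loop at line~\ref{step:partitionwhile} and (b) the total number of iterations, then multiply. For (a), the dominant contributions are the \SP{} call (near-linear by Theorem~\ref{thm:SP}), the $O(\ell)$ conductance and boundary evaluations used to test the conditions at lines~\ref{step:partitionwhile}--\ref{step:ifPSPi} (each $O(|E|)$), and the two new inner while loops at lines~\ref{step:vertexfix}--\ref{step:endfix}. I would bound the first of these by observing that it only removes vertices from the sets $B_i$, so it halts after at most $n$ removals. For the second inner loop, I would introduce the potential
\[
\Phi := \sum_{v \in V} \e{v}{P(v)}\,,
\]
where $P(v)$ denotes the part currently containing $v$; a direct computation shows that moving $v$ from $P_i$ to $P_j$ changes $\Phi$ by exactly $2(\e{v}{P_j} - \e{v}{P_i}) > 0$, and since $\Phi \le 2|E|$ this loop must terminate after at most $|E|$ moves. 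Each individual check or vertex update uses $O(|E|)$ operations with routine bookkeeping, giving a total per-iteration cost of $O(n|E|)$.

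For (b), I would invoke the potential argument from~\cite{OT14}, according to which $O(n|E|)$ non-incrementing main-loop iterations occur between successive executions of the $\ell$-incrementing steps at lines~\ref{step:addBlB} and~\ref{step:addBlP}. Since by Claim~\ref{claim:ellk} these steps fire at most $k-1$ times over the whole run, the total iteration count is $O(kn|E|)$. Combining (a) and (b) yields the claimed $O(kn^2|E|^2)$ bound on the total running time.

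The main obstacle will be confirming that our newly added inner loops do not spoil the OT14 iteration-count argument by undoing progress that its potential measures. I expect this to be unproblematic because each new operation pushes one of the structural quantities already tracked in~\cite{OT14} in a favourable direction: the loop at lines~\ref{step:ifmindegB}--\ref{step:mindegB} only prunes each $B_i$, and the loop at lines~\ref{step:ifmindegP}--\ref{step:mindegP} strictly increases $\Phi/2$, which is precisely the kind of progress that the base OT14 relocation step at lines~\ref{step:ifPimTi}--\ref{step:ifPSPi} already makes. Thus the composite progress measure should remain monotone across iterations, and the running-time argument of~\cite{OT14} should carry through with only the per-iteration accounting modified.
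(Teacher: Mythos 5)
Your proposal takes essentially the same approach as the paper: bound the number of main-loop iterations by $O(kn|E|)$ via the Oveis Gharan--Trevisan potential argument (with the observation that the new inner loops only make favourable progress), and bound per-iteration work, including the inner loops, by $O(n|E|)$. Your explicit potential $\Phi=\sum_{v}\e{v}{P(v)}$ is a formalisation of the paper's informal observation that each move of the loop at line~\ref{step:ifmindegP} strictly decreases the number of edges crossing the partition, so the two arguments are equivalent.
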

\begin{proof}
	In each iteration of the main loop at least one of conditions in lines~\ref{step:ifmaxSB}, \ref{step:ifchangeTi}, \ref{step:ifmaxSP}, \ref{step:ifPimTi}, \ref{step:ifPSPi}
	is satisfied.
	By Claim~\ref{claim:ellk}, lines~\ref{step:ifmaxSB} and \ref{step:ifmaxSP} can be satisfied at most $k-1$ times combined. 
	Line~\ref{step:ifchangeTi} can be satisfied at most $n$
	times because each time the size of some $B_i$ decreases by at least one. 
	For a fixed $\ell$ and $B_1 ,\dotsc,B_\ell$ the conditions in lines \ref{step:ifPimTi} and \ref{step:ifPSPi} can hold at most $O(|E|)$ times combined because each time the number of edges between $P_1,\dotsc,P_\ell$ decreases by at least one.
	This shows that the main loop can execute at most $O(kn|E|)$ times, as in the proof of Oveis Gharan and Trevisan~\cite{OT14}.
	The additional work of the loop at line~\ref{step:ifmindegB} takes time $O(n)$ because the size of some $B_i$ decreases by one each iteration, and the additional work of the loop at line~\ref{step:ifmindegP} takes time $O(|E|)$ because the number of edges between the $P_i$ decreases by at least one each iteration. 
	This completes the proof of a running time bound of $O(kn^2|E|^2)$.
\end{proof}

\bibliographystyle{habbrv}
\bibliography{potts}

\end{document}